\documentclass[12pt,final,
thmsa,
epsf,a4paper]
{article}
\linespread{1.3}

\usepackage{booktabs}




\newtheorem{theorem}{Theorem}

\newtheorem{corollary}{Corollary}
\newtheorem{definition}{Definition}
\newtheorem{example}{Example}
\newtheorem{lemma}{Lemma}
\newtheorem{proposition}{Proposition}
\newtheorem{remark}{Remark}

\newenvironment{proof}[1][Proof]{\emph{#1.} }{\  \hfill $\square $ \vspace{5 pt}}

\usepackage{natbib}
\usepackage{amssymb}
\usepackage[dvips]{graphics}
\usepackage{amsmath}

\usepackage{mathpazo}

\usepackage{enumerate}


\usepackage{amsfonts}

\usepackage{amssymb}

\usepackage{enumerate}

\usepackage{mathrsfs}

\usepackage[usenames]{color}

\setcounter{MaxMatrixCols}{10}

\usepackage{pgf,tikz}
\usetikzlibrary{arrows}

\usetikzlibrary{decorations.pathreplacing,angles,quotes}
\usetikzlibrary{arrows.meta}
\usetikzlibrary{arrows, decorations.markings}
\tikzset{myptr/.style={decoration={markings,mark=at position 1 with %
       {\arrow[scale=2,>=stealth]{>}}},postaction={decorate}}}

\usepackage{hyperref}
\hypersetup{
    colorlinks,
    citecolor=blue,
    filecolor=black,
    linkcolor=blue,
    urlcolor=blue
}



\usepackage[utf8]{inputenc}
\usepackage{amssymb, amsmath,geometry}
\geometry{tmargin=2.5 cm, lmargin=2.5 cm, rmargin=2.5 cm, bmargin=2.5 cm}


\usepackage{fancyhdr}


\usepackage{soul}

\usepackage{emptypage}



\newcommand*\samethanks[1][\value{footnote}]{\footnotemark[#1]}

\usepackage[T1]{fontenc}
\DeclareFontFamily{T1}{calligra}{}
\DeclareFontShape{T1}{calligra}{m}{n}{<->s*[1.44]callig15}{}
\DeclareMathAlphabet\mathcalligra   {T1}{calligra} {m} {n}

\begin{document}
\title{Coalitional stability under myopic expectations and externalities\thanks{We thank Adriana Piazza, Juan Pablo Torres-Martínez, and Matteo Triossi for their helpful comments. We acknowledge the financial support
from Agencia Nacional de Investigación y Desarrollo (ANID) through Fondecyt de Iniciación 11251916, from  UNSL through grants 032016, 030120, and 030320, from Consejo Nacional
de Investigaciones Cient\'{\i}ficas y T\'{e}cnicas (CONICET) through grant
PIP 112-200801-00655.}}
\author{Agust\'in G. Bonifacio\thanks{Instituto de Matem\'atica Aplicada San Luis (UNSL-CONICET), Departamento de Matem\'atica (Universidad Nacional de San Luis), Argentina. E-mails: abonifacio@unsl.edu.ar, paneme@unsl.edu.ar.}\and Mar\'ia Hayd\'ee Fonseca-Mairena\thanks{Corresponding author. Departamento de Econom\'ia y Administraci\'on, Universidad Cat\'olica del Maule. San Miguel 3605, Talca, Chile. Centro de Estudios Urbano Territoriales.  E-mail: mfonseca@ucm.cl.}\and Pablo Neme\samethanks[2]
}

\maketitle

\begin{abstract}  

We study coalition formation problems in the presence of externalities, focusing on settings where agents exhibit myopic expectations, that is, they evaluate potential deviations based solely on the immediate outcome, assuming no further reactions or reorganization by others. First, we establish a sufficient condition for the non-emptiness of both the core and the stable set. In the case of the core, our condition for ensuring non-emptiness also provides a characterization of all core partitions. We then turn our attention to problems with order-preserving preferences. Under our sufficient condition, the core and the stable set not only exist but also coincide, and convergence to a stable outcome is guaranteed. Furthermore, using the notion of absorbing set, we draw a connection between problems with order-preserving preferences and those without externalities. This allows us to lift known core non-emptiness results from the latter setting to the former, thereby establishing a novel bridge between these two classes of problems.

\medskip 
  
\noindent Keywords: Absorbing set, coalition formation, externalities, game theory, myopic expectations.
  


\noindent Economic Literature Classification Numbers: C71, C78, D62.


\end{abstract}

\section{Introduction}\label{introduccion}
We study coalition formation problems with externalities and myopic expectations. Externalities arise because agents care not only about the coalition they are in but also about on what coalitions other agents are. In other words, preferences are defined over partitions rather than individual coalitions.

Relevant examples include sports leagues, where players value both their team's internal quality and talent distribution across rival teams, as this affects their chances of success. Similarly, in political alliance formation, parties or candidates consider not only the direct benefits of their coalition but also the composition of other alliances, which may lead them to adjust their positions to counterbalance unfavorable political configurations. Likewise, in school choice, families and students evaluate not only the quality of a particular school or university but also the profile of students in other institutions, since a school's reputation may depend on both its performance and that of its peers.

Expectations are \emph{myopic} in the sense that, when evaluating a blocking move, agents focus solely on the immediate outcome, assuming that those outside the blocking coalition will not react by reorganizing themselves.\footnote{This notion corresponds to the \emph{$\gamma$-model} discussed by \cite{hart1983endogenous}. These authors argue that when a coalition is formed through the agreement of all its members but some agents later withdraw, the coalition dissolves, leaving the remaining agents as singletons. This notion is also studied by \cite{bloch2014expectation} and is named \emph{disintegration rule}.} The advantage of myopic blocking, or assuming myopic expectations, is that agents only need to consider a single scenario: the one that emerges immediately after the blocking occurs. In contrast, other types of expectations, such as prudent or optimistic, assume highly sophisticated agents who can anticipate specific responses from those not involved in the blocking decision \citep[see][among others]{bando2012many,bando2016two,ehlers2018strategy,fonseca2023coalition}.

This tension between simplicity of reasoning and the complexity of external responses has motivated recent research on how stability can be restored in the presence of externalities. For example, \cite{piazza2024coalitional} study matching markets with externalities and random preferences, and show that the likelihood of finding a coalitionally stable matching increases with market size, under mild conditions. Although their probabilistic perspective contrasts with our structural approach, both contributions underscore the importance of capturing externalities to define meaningful notions of stability.
Naturally, in the absence of externalities, expectations regarding how others might react to deviations become irrelevant.

Based on the myopic blocking notion, we first focus on two widely studied sets: the \emph{core} and the \emph{stable set}. Both consist of partitions in which no set of agents will block in order to benefit. In the case of core partitions, we only require that no set of agents benefit by deviating to a new partition in which they may reorganize into more than one new coalition. In contrast, for partitions in the stable set, we require that there be no set of deviating agents forming one coalition in the new partition. Therefore, it is straightforward from both definitions that the core is a subset of the stable set.

Although myopic expectations are highly relevant in these settings, a major challenge is that the set of stable outcomes, and therefore the core, is often empty. Therefore, it is common in the literature to study conditions under which stability is guaranteed. For example, in coalition formation problems without externalities, this has been explored in \cite{banerjee2001core, bogomolnaia2002stability, iehle2007core}, and in marriage problems with externalities in \cite{sasaki1996two, mumcu2010stable}. Likewise, we focus on identifying sufficient conditions that guarantee the existence of stable partitions in coalition formation problems with externalities, which encompass the aforementioned works. More precisely, we generalize the sufficient conditions proposed by \cite{banerjee2001core} by allowing for externalities and also extend the conditions established by \cite{mumcu2010stable} by allowing for more general coalitions than those formed by one man and one woman. To this end, we build on the intuitions behind the \emph{top-coalition property} introduced by \cite{banerjee2001core} 
 and the \emph{top-matching collection} introduced by \cite{mumcu2010stable}. 
Specifically, we introduce the notion of \emph{effective top-partition collection}. The intuition behind this condition is that agents share a common set of most preferred partitions (the top-partition collection) while still exhibiting enough diversity in how they rank them. That is, whenever there is more than one partition in the top-partition collection, for any pair of distinct partitions, we can identify a group of agents who are together as a coalition in one of them and unanimously prefer that partition over the other. We show that the existence of such a collection is a sufficient condition for the non-emptiness of the core (Theorem \ref{core=v} and Corollary \ref{corenonempty}). Furthermore, Theorem \ref{core=v} offers a characterization of all partitions in the core, derived
from the given top-partition collection.   Under a weaker condition, the \emph{weak effective top-partition collection}, we guarantee that the stable set is non-empty (Theorem \ref{teovsubses} and Corollary \ref{corestablenonempty}).

We then focus on problems with order-preserving preferences, where agents exhibit weak externalities in the sense that they primarily care about the coalition they belong to and only then, in a lexicographic manner, consider the composition of other coalitions. In this setting, the effective top-partition collection not only ensures the non-emptiness of the stable set but also implies that the stable set coincides with the core and consists of a single partition (Theorem \ref{core=estable}).

Myopic expectations also allow us to consider a dynamic process in which an unstable partition evolves as certain agents collectively decide to deviate and form a new coalition, while the abandoned agents become singletons in the new partition. Having in mind such a dynamic process, once stability is ensured, it is relevant to examine whether agents can always reach stability through their own interactions or if an arbitrator is required. A coalition formation problem with a non-empty stable set exhibits \textit{convergence to stability} if, for every non-stable partition, it is always possible to reach a stable partition through the dynamic process.\footnote{The concept of convergence to stability has been studied in various settings, including two-sided matching models (one-to-one and many-to-one) and one-sided matching models (such as coalition formation problems and roommate problems), both without externalities (see, e.g., \cite{roth1990random,chung2000existence,diamantoudi2004random,bonifacio2024characterization}) and with externalities (see \cite{dosSantosBraitt2021matching}, among others).} In general, convergence to stability is not guaranteed. However, in our setting with externalities, we show that in problems with order-preserving preferences, the effective top-partition collection condition is sufficient to guarantee convergence to stability (Theorem \ref{teoconver}).

Another widely studied notion is that of an \emph{absorbing set}, which is a minimal collection of partitions that, once reached through the above-mentioned dynamic process, will not be left.\footnote{This concept has been examined under different names and in various settings. To our knowledge, \cite{schwartz1970possibility} was the first to introduce it in the context of collective decision-making problems. See also \cite{jackson2002evolution,inarra2013absorbing,olaizola2014asymmetric}. The union of absorbing sets forms what is referred to as the \textit{admissible set} \citep{kalai1977admissible}. More recently, \cite{demuynck2019myopic} defined the \textit{myopic stable set} in a broad class of social environments and explored its relationship with other solution concepts. For general coalition formation problems without externalities, \cite{bonifacio2024characterization} characterize absorbing sets in terms of a special structure on preferences called the \textit{reduced form}.} The importance of the notion of an absorbing set is two-fold. On the one hand, an absorbing set is a more general notion than a stable partition, since a stable partition corresponds to an absorbing set consisting solely of that partition (trivial absorbing set), while non-trivial absorbing sets contain multiple partitions, none of which are stable. On the other hand, it provides a powerful tool for finding other sufficient conditions to ensure stability when the existence of an (weak) effective top-partition collection cannot be guaranteed. More precisely, with order-preserving preferences, we show that problems that induce the same problem without externalities generate the same absorbing sets (Theorem \ref{teo associated problems}). 
This finding is significant, as it shows that several sufficient conditions for the existence of a non-empty stable set---and, therefore, a non-empty core---in coalition formation problems with externalities can be derived by adapting results from classical settings without externalities.\footnote{For properties guaranteeing stability in coalition formation problems without externalities, see  \cite{farrell1988partnerships,banerjee2001core,bogomolnaia2002stability,alcalde2006,pycia2012stability,klaus2023core}, among others.} In particular, this theorem allows us to extend known results to environments with order-preserving preferences, providing a unified framework for understanding stability under more complex interactions (Corollary \ref{corollary:x-condition}).

Finally, in Section \ref{Seccion  Beyond myopic} we go beyond the standard myopic expectations by relaxing the assumption that non-participating agents dissolve into singletons after a deviation. We introduce a generalized notion of weak-myopic blocking, which allows for more flexible post-deviation reconfigurations and gives rise to the weak-myopic core. We also explore how different types of expectations---prudent and optimistic---affect stability outcomes, and show that, under order-preserving preferences, these refinements do not alter the collection of absorbing sets. This robustness result further supports the use of absorbing sets as a unifying framework for understanding stability under a range of behavioral assumptions.

The remainder of the paper is organized as follows. Section \ref{el modelo} presents the model and preliminaries. The main results are collected in Section \ref{seccion resultados}. Subsection \ref{subseccion de unrestricted domain} examines sufficient conditions to ensure the non-emptiness of the core and the stable set in problems with unrestricted preferences. Subsection \ref{subseccion de order-preserving preferences} focuses on problems with order-preserving preferences, strengthening the stability results and analyzing convergence to stability. Additional results based on the notion of absorbing set are presented in Subsection \ref{subseccion de absorbing sets}. Section \ref{Seccion Beyond myopic} explores how some of our results generalize beyond the assumption of myopic expectations. Finally, Section \ref{conclusiones} provides concluding remarks.


\section{The model}\label{el modelo}
Let $N=\{1,\ldots,n\}$ be a finite set of agents. Each non-empty $C\subseteq N$ is called a coalition. Let  $\mathcal{K}\subseteq 2^{N}\setminus \lbrace \emptyset \rbrace$ be the set of admissible coalitions. A \emph{partition} is a collection of disjoint coalitions $\pi=\{C_1,\ldots,C_k\}\subseteq \mathcal{K}$ such that $\cup_{i=1}^{k}C_i=N$.
Let $\Pi$ be the set of admissible partitions. 
Each agent $i\in N$ has complete and transitive preferences, denoted by  $\succsim_{i}$, over $\Pi$. Let $\succ_i$ and $\sim_i$ be the asymmetric and symmetric parts of $\succsim_i$, respectively. A preference profile is  $\succsim=\left(\succsim_{i}\right)_{i\in N}$. A set of permissible coalitions $\mathcal{K}$ and a preference profile $\succsim$ over admissible partitions defines a \textit{(coalition formation) problem}, denoted by $\left(\mathcal{K},\succsim\right)$.

The \textit{domain of problems with unrestricted preferences}, denoted by $\mathcal{R}$, consists of all problems with complete and transitive preferences, which also satisfy the condition that no agent can be indifferent between two structures that leave her in two different coalitions. Formally, for each $i\in N$ and each $\succsim_i$, we have that  $\pi \sim_i\pi'$ implies $ \pi(i)=\pi'(i)$.\footnote{This condition is commonly required in the literature to relate a coalition formation problem with externalities to one without externalities and with strict preferences \citep[see][for more details]{hong2022core,fonseca2023coalition,fonseca2023notes}.} 



We say that a problem $\left(\mathcal{K},\succsim\right)$ \textit{has externalities} if there is at least one agent who is not
indifferent between two partitions in which she belongs to the same coalition, i.e., if there is $i\in N$ such that $\pi \succ_i \pi'$ even though $\pi(i)=\pi'(i)$. This implies that the welfare of each agent also depends on the coalitions she does not belong to. 

    A partition $\pi$ is \textit{(myopically) blocked  by a coalition $C$ via a partition $\pi'$} if:
    \begin{enumerate}[(i)]
        \item $\pi' \succsim_i \pi$ for each $i\in C,$
        \item $\pi' \succ_j \pi$ for some $j\in C$,
        \item for each $C'\in \pi$ such that $C'\cap C\neq \emptyset$, $\pi'(j)=\{j\}$ for each $j\in C' \setminus C,$ and
        \item for each $C'\in \pi$ such that $C'\cap C= \emptyset$, $C' \in \pi'$.
    \end{enumerate}

Condition (i) establishes that no member of the coalition is worse off under the new partition compared to the original. Condition (ii) specifies that there is at least one member of the coalition who strictly prefers the alternative partition over the original. 
Condition (iii) states that the abandoned agents must stay as singletons in the new partition. This prevents those outside the coalition from remaining in a group with coalition members and potentially sharing in the benefits of the deviation. Finally, Condition (iv) guarantees that any group from the original partition that does not include any members of the coalition must remain unchanged in the alternative partition. This condition preserves the original partition for those not involved in the deviation. Together, these conditions reflect the essence of myopic blocking. The term ``myopic'' here captures the short-sighted nature of the blocking move: the coalition is concerned only with securing immediate improvements for its members, without taking into account any broader, possibly beneficial, reconfigurations that might occur if the deviation were allowed to affect all agents. Moreover, in such a myopic blocking process, it is assumed that agents outside the blocking coalition will not react strategically by forming new groupings in response to the deviation.\footnote{When externalities are present, agents outside the deviating coalition react can vary, leading to different notions of blocking. The myopic blocking approach assumes that outsiders do not react. Alternative concepts, such as prudent or optimistic blocking, account for specific forms of reactions. In the absence of externalities, however, all notions of blocking are equivalent, as agents outside the deviating coalition are unaffected and do not influence the feasibility of a deviation \citep[see][for more details]{sasaki1996two,bloch2014expectation}.}

\begin{definition}\label{bloqueo core-wise}
The set of partitions that are not blocked is called the \textbf{(myopic) core}, and denoted by $\mathcal{C}.$
\end{definition}

Note that, the notion of blocking assumes that the deviating coalition is not necessarily required to remain formed in the new partition $\pi'$. Instead, coalition members may reorganize into different groups after the deviation. If we impose that the deviating coalition must stay formed in the new partition, we can define a domination relation among partitions that is useful to define stability.

In problems without externalities, the core and the set of stable partitions coincide because individuals evaluate only their own coalition, without considering how others are grouped. 

However, in problems with externalities the relation between the core and the set of stable partitions is more subtle. A partition $\pi$ could be stable even though it is blocked by a coalition $C$ via a partition $\pi'$ \emph{as long as $C$ does not belong to $\pi'$}. Therefore,  stability involves a stronger notion of blocking because it demands that coalition $C$ must be formed in the new partition. Formally,

\begin{definition}\label{bloqueo estable}
    A partition $\pi$ is \textbf{(myopically) stable} if there is no partition $\pi'$ and coalition $C$ such that $\pi$ is blocked by coalition $C$ via $\pi'$ and $C\in \pi'$. The set of all stable partitions is denoted by $\mathcal{S}.$ 
\end{definition}

 This implies that,     
as a result, the core is a subset of the set of stable partitions.

\begin{remark}
By Definition \ref{bloqueo core-wise} and Definition \ref{bloqueo estable}, $\mathcal{C}\subseteq \mathcal{S}
$.    
\end{remark}

In problems with externalities, the stable set (and therefore the core) can be empty when agents are assumed to be myopic \citep{sasaki1996two} as shown in the following example.
\begin{example}\label{ejemplo 0}
 Consider the three-agent problem  with externalities given in Table \ref{table ejemplo de 3 agentes con estables vacio}.\footnote{Throughout the paper, in order to ease notation, we denote coalitions without curly brackets and commas, i.e.,  coalition $\{1,2\}$ is simply written $12$.}   
Observe that partition $\{13,2\}$ is blocked by coalition $23$, $\{12,3\}$ is blocked by coalition $13$, $\{1,23\}$ is blocked by coalition $12$ and, therefore, there is no stable partition. 

\begin{table}[ht!]{\small
    \centering
         \begin{tabular}{@{}ccccccccc@{}}\hline
        $\boldsymbol{1}$ && $\boldsymbol{2}$ && $\boldsymbol{3}$  \\ 
        \hline
        $\{13,2\}$ && $\{12,3\}$ && $\{1,23\}$ \\ 
        $\{12,3\}$ && $\{1,23\}$ && $\{13,2\}$  \\ 
        $\{1,23\}$&& $\{13,2\}$ && $\{12,3\}$ \\ 
       $\{1,2,3\}$  && $\{1,2,3\}$ && $\{1,2,3\}$  \\ \hline
         \end{tabular} 
         \caption{\small A problem with externalities and empty stable set.}
          \label{table ejemplo de 3 agentes con estables vacio}}
\end{table}
\end{example}

\section{Results}\label{seccion resultados}

In this section, following the ideas of \cite{banerjee2001core} for coalition formation problems without externalities and of \cite{mumcu2010stable} for the marriage problem with externalities, we define sufficient conditions for the non-emptiness of the core and the stable set in coalition formation problems with externalities.

The condition we impose to guarantee a non-empty stable set applies to the $k$-best partitions that all agents share.  
Note that such a $k$ always exists and is less than or equal to the total number of partitions in the problem, i.e., $k \leq |\Pi|.$  
Let $t_\ell(\succ_{i})$ denote the partition in the $\ell$-th position in preference $\succ_i$.  
We now formally define the set of partitions that all agents agree to be the $k$-best partitions. 

  \begin{definition}\label{defino top partition collection}
A non-empty set of partitions  $\mathcal{V}$ is a \textbf{top-partition collection} if $t_\ell(\succ_{i})\in \mathcal{V}$ for each $i\in N$ and for each $\ell\in\{1,\ldots,|\mathcal{V}|\}$.
  \end{definition}
Note that, given a problem with externalities, there may be more than one top-partition collection. However, we will focus on problems where agents agree on the $k$-best partitions but sufficiently disagree on how to rank them.   
Specifically, for any two partitions $\pi,\pi'\in \mathcal{V}$, there is a coalition in $\pi$ but not in $\pi'$ such that each agent in the coalition prefers $\pi$ over $\pi'$. To formalize this idea,  given $\pi,\pi'\in \Pi$, let $D(\pi,\pi')$ be the set of agents that belong to different coalitions under $\pi$ and $\pi'$, i.e., $D(\pi,\pi')=\{i\in N:\pi(i) \neq \pi'(i)\}$.  
\begin{definition}
    A top-partition collection $\mathcal{V}$ is \textbf{effective} if either $|\mathcal{V}|=1$ or whenever $|\mathcal{V}|>1$ we have that, for any two different partitions $\pi,\pi' \in \mathcal{V},$ there is a non-empty set $S \subseteq D(\pi,\pi')$ such that $S\in \pi$ and $\pi \succ_i \pi'$ for each $i\in S.$
\end{definition}

It is important to highlight that if there is an effective top-partition collection in a problem, it is unique.

\subsection{Problems with  unrestricted preferences}\label{subseccion de unrestricted domain}

The definition of an effective top-partition collection ensures that agents cannot form a blocking coalition to move from one top-ranked partition to another. By imposing this restriction, the definition reinforces the stability of the core and the stable set in coalition formation problems with externalities. Moreover, the existence of an effective top-partition collection not only guarantees the non-emptiness of the core, but also enables us to precisely identify the core partitions. The following theorem characterizes the core partitions in terms of an effective top-partition collection. 

\begin{theorem}\label{core=v}
    Let $\left(\mathcal{K},\succsim\right)$ be a 
    problem 
    in 
    $\mathcal{R}$. If there is an effective top-partition collection $\mathcal{V}$, then $\mathcal{C}=\mathcal{V}.$ 
\end{theorem}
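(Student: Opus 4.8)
The plan is to establish the two inclusions $\mathcal{V}\subseteq\mathcal{C}$ and $\mathcal{C}\subseteq\mathcal{V}$ separately. For $\mathcal{V}\subseteq\mathcal{C}$, I would take an arbitrary $\pi\in\mathcal{V}$ and suppose toward a contradiction that it is blocked by a coalition $C$ via some partition $\pi'$. The key observation is that every agent in $C$ weakly prefers $\pi'$ to $\pi$, and at least one strictly prefers it; since $\pi$ lies in the top-partition collection $\mathcal{V}$, any partition weakly preferred to $\pi$ by some agent whose strict preference is witnessed must itself occupy one of the top $|\mathcal{V}|$ positions in that agent's ranking, hence $\pi'\in\mathcal{V}$ as well. (This uses Definition \ref{defino top partition collection} together with the no-indifference-across-coalitions feature of $\mathcal{R}$ to pin down that $\pi'$ cannot be one of the lower-ranked partitions.) Now $\pi,\pi'\in\mathcal{V}$ are distinct, so effectiveness supplies a non-empty $S\subseteq D(\pi,\pi')$ with $S\in\pi$ and $\pi\succ_i\pi'$ for all $i\in S$. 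I would then argue that $S$ must intersect $C$: agents in $S$ change coalitions between $\pi$ and $\pi'$, and by blocking conditions (iii)–(iv) the only agents whose coalition changes are those in $C$ or those abandoned by $C$; but an abandoned agent becomes a singleton in $\pi'$, and since $S\in\pi$ is a coalition of $\pi$ that survives as we move to $\pi'$ only if untouched, $S$ being moved forces $S\cap C\neq\emptyset$. Picking $i\in S\cap C$ gives $\pi\succ_i\pi'$, contradicting blocking condition (i). Hence $\pi$ is not blocked, so $\pi\in\mathcal{C}$.

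**The reverse inclusion.** For $\mathcal{C}\subseteq\mathcal{V}$, I would take $\pi\in\mathcal{C}$ and suppose $\pi\notin\mathcal{V}$. The idea is to build an explicit blocking move toward a top partition. Fix any $\pi^{*}\in\mathcal{V}$; since $\pi\notin\mathcal{V}$ and $\mathcal{V}$ collects exactly the top $|\mathcal{V}|$ partitions of every agent, every agent strictly prefers every partition in $\mathcal{V}$ to $\pi$ — in particular every agent strictly prefers $\pi^{*}$ to $\pi$. I would like to use $\pi^{*}$ itself (or a suitable coalition of it) as the blocking instrument: take $C$ to be a coalition of $\pi^{*}$ on which the deviation is self-contained, i.e. choose $C\in\pi^{*}$ with $C\neq\pi(i)$ for its members (such a coalition exists because $\pi\neq\pi^{*}$), and let $\pi'$ be obtained from $\pi$ by forming $C$ and scattering the agents it abandons into singletons. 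The subtlety here is that $\pi'$ need not equal $\pi^{*}$, so I cannot directly invoke "all agents prefer $\pi^{*}$"; instead I would choose $C$ recursively as a *top coalition* structure: iterate over $\mathcal{V}$ picking, for the most preferred partition, a coalition that all its members rank at the top, exactly mirroring the top-coalition induction of \cite{banerjee2001core}. Effectiveness is what guarantees at each stage that some such coalition with the strict-preference property exists among the remaining agents. This produces a block of $\pi$, contradicting $\pi\in\mathcal{C}$.

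**Main obstacle.** The delicate point — and the step I expect to require the most care — is the reverse inclusion, specifically showing that when $\pi\notin\mathcal{V}$ one can always manufacture a *valid* myopic block (one respecting conditions (iii)–(iv), where abandoned agents become singletons) rather than merely exhibiting a partition everyone prefers. Because a myopic deviation forces outsiders adjacent to $C$ into singletons, the resulting $\pi'$ may fall outside $\mathcal{V}$ and even be dispreferred to $\pi$ by the abandoned agents; so the blocking coalition $C$ must be chosen so that it does not abandon anyone who would thereby be harmed in a way that matters — but blocking only constrains members of $C$, so this is automatic, and the real work is just ensuring $C$ is non-empty and well-defined at each inductive step. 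Establishing that requires leveraging effectiveness to extract, from any pair of top partitions, a coalition with the unanimous strict-preference property, and then checking that the induction over $\mathcal{V}$ (ordered by the common ranking) never gets stuck. Once the block is built, conditions (i)–(iv) are verified directly from the construction. The forward inclusion, by contrast, is a short contradiction argument once one notes that a blocking move out of a top partition would have to land in $\mathcal{V}$ and then run afoul of effectiveness.
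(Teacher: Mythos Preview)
Your forward inclusion $\mathcal{V}\subseteq\mathcal{C}$ is essentially the paper's argument: if $\pi\in\mathcal{V}$ were blocked by some $T$ via $\pi'$, then $\pi'\in\mathcal{V}$ (a blocker strictly prefers $\pi'$, and $\mathcal{V}$ collects every agent's top $|\mathcal{V}|$ partitions), and effectiveness yields $S\in\pi$ with $S\subseteq D(\pi,\pi')$ and $\pi\succ_i\pi'$ for all $i\in S$; since condition~(iv) forces any coalition of $\pi$ disjoint from $T$ to persist in $\pi'$, one gets either $S\cap T\neq\emptyset$ (your phrasing) or $S\not\subseteq D(\pi,\pi')$ (the paper's phrasing) --- the same contradiction either way.

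The reverse inclusion, however, is vastly overcomplicated, and the ``main obstacle'' you flag does not exist. For the \emph{core} (Definition~\ref{bloqueo core-wise}), unlike the stable set, the blocking coalition $C$ is not required to belong to $\pi'$. In particular, $C=N$ is always a legitimate blocking coalition: with $C=N$, every $C'\in\pi$ satisfies $C'\setminus C=\emptyset$ and no $C'\in\pi$ satisfies $C'\cap C=\emptyset$, so conditions (iii) and (iv) are vacuous and \emph{any} $\pi'\in\Pi$ may serve as the target. Hence if $\pi\notin\mathcal{V}$, pick any $\pi'\in\mathcal{V}$; every agent strictly prefers $\pi'$ to $\pi$, so $N$ blocks $\pi$ via $\pi'$, and $\pi\notin\mathcal{C}$. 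This is the paper's one-line proof of $\mathcal{C}\subseteq\mathcal{V}$. Your recursive top-coalition construction is unnecessary, and as sketched it is not clearly sound: effectiveness compares two top partitions and returns a coalition of one of them, but it does not hand you a coalition whose members unanimously prefer the \emph{post-deviation} partition (with scattered singletons, possibly outside $\mathcal{V}$) to $\pi$, which is what your inductive step would need.
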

\begin{proof}
    First we prove that $\mathcal{C}\subseteq\mathcal{V}.$ Let $\pi\notin \mathcal{V}$ and let $\pi' \in \mathcal{V}.$ Thus, $\pi$ is blocked by the grand coalition $N$ via $\pi'$. Then, $\pi\notin \mathcal{C}.$
    Second, we prove that $\mathcal{V}\subseteq\mathcal{C}.$ If $|\mathcal{V}|=1$, it is straightforward that $\mathcal{V}=\mathcal{C}.$ If $|\mathcal{V}|>1$, take $\pi\in \mathcal{V}$ and assume $\pi\notin \mathcal{C}.$ Thus, there are $\pi'\in \Pi \setminus \{\pi\}$ and $T\subseteq N$ such that $\pi$ is blocked by $T$ via $\pi'$. Then $\pi'\in \mathcal{V}$.  W.l.o.g., assume that $T\subseteq D(\pi',\pi).$ Then, $\pi' \succ_i \pi$ for each $i\in T.$ Additionally, there is $T'\subseteq T$ such that $T'\in \pi'$.    Moreover, since $\mathcal{V}$ is effective, there is a non-empty coalition  $S\subseteq D(\pi',\pi)$ such that $S\in \pi$ and $\pi \succ_i \pi'$ for each $i\in S.$ Then, $S\cap T=\emptyset.$ By Definition \ref{bloqueo core-wise} (iv), $\pi(i)=\pi'(i)$ for each $i\in S,$ contradicting that $S\subseteq D(\pi',\pi),$ implying that $\pi\in \mathcal{C}.$ Therefore, $\mathcal{V}=\mathcal{C}.$
\end{proof}

\begin{corollary}\label{corenonempty}
    Let $\left(\mathcal{K},\succsim\right)$ be a problem in
    $\mathcal{R}$. If there is an effective top-partition collection, then $\mathcal{C}\neq\emptyset.$ 
\end{corollary}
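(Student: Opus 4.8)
The statement to prove is Corollary \ref{corenonempty}: if there is an effective top-partition collection, then $\mathcal{C} \neq \emptyset$. The natural approach is to derive it directly from Theorem \ref{core=v}, which is stated immediately before it. Theorem \ref{core=v} asserts that, for a problem in $\mathcal{R}$, the existence of an effective top-partition collection $\mathcal{V}$ implies $\mathcal{C} = \mathcal{V}$. So the whole content of the corollary reduces to observing that $\mathcal{V}$ is non-empty.

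\textbf{Key step.} By Definition \ref{defino top partition collection}, a top-partition collection is \emph{a non-empty set of partitions} — non-emptiness is built into the definition. (Alternatively, since $|\Pi| \geq 1$ and $t_1(\succ_i) \in \mathcal{V}$ for every $i$, the set $\mathcal{V}$ contains at least the common top choice, hence is non-empty.) Therefore $\mathcal{V} \neq \emptyset$, and by Theorem \ref{core=v} we have $\mathcal{C} = \mathcal{V} \neq \emptyset$, which is exactly the claim.

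\textbf{Main obstacle.} There is essentially no obstacle here: the corollary is an immediate consequence of the theorem together with the definitional non-emptiness of a top-partition collection. The only thing to be careful about is to invoke the hypothesis that the problem lies in $\mathcal{R}$, since Theorem \ref{core=v} is stated under that restriction; the effectiveness of $\mathcal{V}$ and membership in $\mathcal{R}$ are precisely the hypotheses carried over from the theorem. So the proof is a one-line deduction:
\begin{proof}
By Definition \ref{defino top partition collection}, the effective top-partition collection $\mathcal{V}$ is non-empty. Hence, by Theorem \ref{core=v}, $\mathcal{C} = \mathcal{V} \neq \emptyset$.
\end{proof}
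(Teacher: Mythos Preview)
Your proof is correct and matches the paper's approach: the corollary is stated without explicit proof precisely because it follows immediately from Theorem~\ref{core=v} together with the definitional non-emptiness of a top-partition collection. Your one-line deduction is exactly the intended argument.
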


The following example demonstrates that the core can be non-empty even in the absence of an effective top-partition collection. Thus, the previous corollary provides a sufficient but not necessary condition.
\begin{example}\label{ejemplo_core_strictsubset_stable}
Consider the five-agent problem  with externalities given in Table \ref{table ejemplo de 5 agentes}. Observe that $\{12,34,5\}$ is in the core.  Moreover, partitions $\pi=\{13,24,5\}$ and $\pi'=\{12,34,5\}$ belong to any top-partition collection. However, there is not a set $S\subseteq D(\pi, \pi')$ such that $S\in \pi$ and $\pi\succ_i \pi'$ for each $i\in S$. Thus, there is no effective top-partition collection.  
\begin{table}[ht!]{\small
    \centering
         \begin{tabular}{@{}ccccccccccccc@{}}\hline
        $\boldsymbol{1}$ && $\boldsymbol{2}$ && $\boldsymbol{3}$ && $\boldsymbol{4}$ && $\boldsymbol{5}$ \\ 
        \hline
        $\{12,34,5\}$ && $\{12,34,5\}$ && $\{12,34,5\}$ && $\{12,34,5\}$ && $\{13,24,5\}$ \\ 
        $\{13,24,5\}$ && $\{13,24,5\}$ && $\{13,24,5\}$&& $\{13,24,5\}$ && $\{12,34,5\}$\\ 
        $\vdots$  && $\vdots$ && $\vdots$ && $\vdots$ && $\vdots$ \\       \hline
         \end{tabular} 
          \caption{\small A problem with no effective top-partition collection and non-empty core.}
          \label{table ejemplo de 5 agentes}}
\end{table}
        \end{example}

Since, by definition, the core is a subset of the stable set, it is possible to have a problem where the core is empty while the stable set is not. The following example illustrates this fact.
\begin{example}
   Consider a game where \( N = \{1, 2, 3, 4, 5, 6\} \), the set of feasible coalitions is given by
\[
\mathcal{K} = \{12, 14, 16, 23, 25, 34, 36, 45, 56, 1, 2, 3, 4, 5, 6\},
\]
and preferences are described in Table \ref{table ejemplo de 6 agents core vacio y stable no}.
\\
\begin{table}[ht!]{\small
    \centering
         \begin{tabular}{@{}ccccccccccccccc@{}}\hline
        $\boldsymbol{1}$ && $\boldsymbol{2}$ && $\boldsymbol{3}$ && $\boldsymbol{4}$ && $\boldsymbol{5}$&& $\boldsymbol{6}$ \\ 
        \hline
        $\{16,34,25\}$ && $\{12,34,56\}$ && $\{12,36,45\}$ && $\{16,23,45\}$ &&  $\{16,34,25\}$ &&  $\{16,23,45\}$\\ 
       $\{14,36,25\}$ && $\{14,36,25\}$ && $\{16,23,45\}$ && $\{16,34,25\}$ &&  $\{14,36,25\}$ &&  $\{16,34,25\}$\\
        $\{12,36,45\}$ && $\{12,36,45\}$ && $\{14,23,56\}$ && $\{14,36,25\}$ &&  $\{12,36,45\}$ &&  $\{14,36,25\}$\\
         $\{12,34,56\}$ && $\{12,45,3,6\}$ && $\{16,23,4,5\}$ && $\{12,36,45\}$ &&   $\{12,34,56\}$ &&  $\{12,36,45\}$\\
          $\{12,45,3,6\}$ && $\{16,23,45\}$ && $\{16,34,25\}$ && $\{12,34,56\}$ &&  $\{12,45,3,6\}$ &&  $\{12,34,56\}$\\
           $\{16,23,45\}$ && $\{14,23,56\}$ && $\{14,36,25\}$ && $\{14,23,56\}$ &&  $\{16,23,45\}$ && $\{14,23,56\}$\\
            $\{14,23,56\}$ && $\{16,23,4,5\}$ && $\{12,34,56\}$ && $\vdots$ &&   $\{14,23,56\}$ && $\vdots$\\
             $\vdots$ && $\{16,34,25\}$ && $\vdots$ && &&  $\vdots$  &&  \\
          && $\vdots$ &&   &&   &&  \\       \hline
         \end{tabular} 
          \caption{\small A problem with empty core and non-empty stable set.}
          \label{table ejemplo de 6 agents core vacio y stable no}}
\end{table}
It can be observed that the stable set consists of the partitions $\{12,34,56\}$, $\{12,36,45\}$, $\{14,23,56\}$, and $\{14,36,25\}$.  
However, the partition $\{12,34,56\}$ is blocked by the coalition $3456$ via the partition $\{12,36,45\}$;  
partition $\{12,36,45\}$ is blocked by the coalition $1245$ via the partition $\{14,36,25\}$;  
partition $\{14,23,56\}$ is blocked by the coalition $N$ via the partition $\{12,36,45\}$;  
and partition $\{14,36,25\}$ is blocked by the coalition $1346$ via the partition $\{16,34,25\}$. This implies that $\mathcal{C}=\emptyset.$
\end{example}

Next, we introduce a weaker sufficient condition that guarantees the existence of a non-empty stable set.

\begin{definition}\label{defino weak effective}
Let $\mathcal{V}$ be a top-partition collection. We say that $\mathcal{V}$ is \textbf{weak effective} if given any two different partitions $\pi,\pi'\in \mathcal{V}$ such that there is $C\in \pi'$ and $\pi'$ is defined as follows
    $$\pi'(i)=\begin{cases}
        C& \text{ if }i\in C\\
         \pi(i)&\text{ if }\pi(i) \cap C=\emptyset\\
        \{i\} & \text{ otherwise}
    \end{cases}$$
then there is $j\in C$ such that $\pi\succ_j\pi'.$
    \end{definition}

This construction of $\pi'$ from $\pi$ corresponds to forming a new coalition $C$ while breaking apart any original coalitions in $\pi$ that intersect with $C$.
The weak effective condition then ensures that such a deviation is not unanimously appealing to the agents involved: at least one agent in $C$ would prefer to remain under the structure of $\pi$.
Unlike the stronger effective condition, which requires that for any pair of distinct top-partitions there exists a full coalition in one that strictly prefers it over the other, weak effectiveness only requires the existence of an agent disagreeing in the specific case where the deviation involves introducing a new coalition $C$ and adjusting the coalitions in the original partition that overlap with it.
This weaker requirement is sufficient to ensure that for each partition in the top-partition collection, there is no coalition that blocks it via another partition from such a collection.

In Example \ref{ejemplo_core_strictsubset_stable} the top-partition collection $\mathcal{V}=\{\pi,\pi'\}$ is weak effective (but not effective). In fact $\mathcal{C}\subsetneq\mathcal{V}=\mathcal{S}$. The next result demonstrates that the weak effective top-partition collection condition is sufficient to guarantee the existence of a non-empty stable set.

\begin{theorem}\label{teovsubses}
   Let $\left(\mathcal{K},\succsim\right)$ be a problem in
    $\mathcal{R}$ and let $\mathcal{V}$ be a top-partition collection. If $\mathcal{V}$ is weak effective, then $\mathcal{V}\subseteq \mathcal{S}.$
\end{theorem}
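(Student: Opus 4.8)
The plan is to argue by contradiction. Suppose some $\pi\in\mathcal{V}$ is not stable; then by Definition \ref{bloqueo estable} there are a partition $\pi'$ and a coalition $C$ such that $\pi$ is blocked by $C$ via $\pi'$ and $C\in\pi'$. Blocking condition (ii) provides some $j\in C$ with $\pi'\succ_j\pi$, so in particular $\pi\neq\pi'$. Since $\mathcal{V}$ is a top-partition collection, $\mathcal{V}$ is exactly the set of the $|\mathcal{V}|$ best partitions of every agent; as $\pi\in\mathcal{V}$ and $\pi'\succ_j\pi$, this forces $\pi'\in\mathcal{V}$. This is the same observation already used in the proof of Theorem \ref{core=v}: a partition ranked strictly above a member of $\mathcal{V}$ must itself belong to $\mathcal{V}$.

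Next I would verify that $\pi'$ is precisely the partition constructed from $\pi$ and $C$ by the formula in Definition \ref{defino weak effective}. This is a short case analysis on an agent $i$: if $i\in C$ then $\pi'(i)=C$ because $C\in\pi'$; if $i\notin C$ and $\pi(i)\cap C=\emptyset$ then blocking condition (iv) applied to $C'=\pi(i)$ gives $\pi(i)\in\pi'$, hence $\pi'(i)=\pi(i)$; and if $i\notin C$ but $\pi(i)\cap C\neq\emptyset$ then blocking condition (iii) applied to $C'=\pi(i)$, noting $i\in C'\setminus C$, gives $\pi'(i)=\{i\}$. Thus $\pi,\pi'\in\mathcal{V}$ are distinct, $C\in\pi'$, and $\pi'$ has exactly the form appearing in the hypothesis of weak effectiveness.

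Weak effectiveness of $\mathcal{V}$ then yields an agent $j'\in C$ with $\pi\succ_{j'}\pi'$, which directly contradicts blocking condition (i), namely $\pi'\succsim_{j'}\pi$. Hence no such $C$ and $\pi'$ exist, $\pi$ is stable, and since $\pi\in\mathcal{V}$ was arbitrary we conclude $\mathcal{V}\subseteq\mathcal{S}$. I expect the only step requiring any care to be the verification in the second paragraph that the ``blocked-via'' partition coincides with the canonical partition in Definition \ref{defino weak effective}; everything else is a direct unwinding of the definitions, with the strictness of the comparisons $\pi'\succ_j\pi$ and $\pi\succ_{j'}\pi'$ (well behaved on the domain $\mathcal{R}$) supplying the contradiction.
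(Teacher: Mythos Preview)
Your proof is correct and follows essentially the same contradiction strategy as the paper's own argument. The only difference is that you spell out in detail the verification that the blocking partition $\pi'$ coincides with the canonical partition in Definition~\ref{defino weak effective}, whereas the paper leaves that step implicit in the sentence ``by Definition~\ref{defino weak effective}, $\mathcal{V}$ is not weak effective.''
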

\begin{proof}
 Given  $\pi\in \mathcal{V}$, by Definition \ref{defino top partition collection}, there is no  
 $C$ which blocks $\pi$ via some $\pi'\in \Pi\setminus \mathcal{V}$ such that $C\in \pi'$.   Note that if $|\mathcal{V}|=1,$ $\mathcal{V}\subseteq \mathcal{S}$.

  If $|\mathcal{V}|>1,$ take $\pi\in \mathcal{V}.$ If $\pi\notin \mathcal{S},$ there is $\pi'\in \Pi$ and $C\in \pi'$ such that $C$ blocks $\pi$ via $\pi'$.  Then, $\pi'\in \mathcal{V}.$ Thus, by Definition \ref{defino weak effective}, $\mathcal{V}$ in not weak effective, a contradiction.
\end{proof}
\begin{corollary}\label{corestablenonempty}
   Let $\left(\mathcal{K},\succsim\right)$ be a problem in
    $\mathcal{R}$ and let $\mathcal{V}$ be a top-partition collection. If $\mathcal{V}$ is weak effective, then $\mathcal{S}\neq \emptyset.$
\end{corollary}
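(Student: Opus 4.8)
The plan is to deduce Corollary \ref{corestablenonempty} directly from Theorem \ref{teovsubses}, which has just been established. The only thing that needs justification is that a top-partition collection $\mathcal{V}$ is always non-empty, so that $\mathcal{V}\subseteq\mathcal{S}$ forces $\mathcal{S}\neq\emptyset$.

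First I would recall that, by Definition \ref{defino top partition collection}, a top-partition collection is required to be a non-empty set of partitions; alternatively, as noted in the text preceding that definition, for any problem there exists an integer $k\leq|\Pi|$ such that the $k$-best partitions are shared by all agents, and taking $\mathcal{V}$ to be such a collection gives $|\mathcal{V}|\geq 1$. In particular $t_1(\succ_i)\in\mathcal{V}$ for every $i\in N$, so $\mathcal{V}$ contains at least one element. Hence $\mathcal{V}\neq\emptyset$.

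Second, I would invoke Theorem \ref{teovsubses}: since the hypotheses are exactly that $\left(\mathcal{K},\succsim\right)\in\mathcal{R}$ and that $\mathcal{V}$ is a weak effective top-partition collection, we obtain $\mathcal{V}\subseteq\mathcal{S}$. Combining this with $\mathcal{V}\neq\emptyset$ yields $\mathcal{S}\neq\emptyset$, which is the claim. The argument is a one-line corollary, so there is no real obstacle; the only subtlety worth making explicit is that the non-emptiness of $\mathcal{S}$ rests entirely on the (built-in) non-emptiness of $\mathcal{V}$ together with the inclusion supplied by the theorem, and that a weak effective top-partition collection is guaranteed to exist here only because it is assumed in the statement — absent that assumption, $\mathcal{S}$ can indeed be empty, as Example \ref{ejemplo 0} shows.

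\begin{proof}
By Definition \ref{defino top partition collection}, $\mathcal{V}$ is a non-empty set of partitions; in particular $t_1(\succ_i)\in\mathcal{V}$ for each $i\in N$, so $\mathcal{V}\neq\emptyset$. Since $\mathcal{V}$ is weak effective, Theorem \ref{teovsubses} gives $\mathcal{V}\subseteq\mathcal{S}$. Therefore $\emptyset\neq\mathcal{V}\subseteq\mathcal{S}$, and hence $\mathcal{S}\neq\emptyset$.
\end{proof}
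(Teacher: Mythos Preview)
Your proof is correct and follows exactly the approach implicit in the paper: the corollary is stated without proof immediately after Theorem \ref{teovsubses}, and your argument---non-emptiness of $\mathcal{V}$ from Definition \ref{defino top partition collection} together with the inclusion $\mathcal{V}\subseteq\mathcal{S}$ from the theorem---is precisely the intended one-line deduction.
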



The following example demonstrates that the stable set can be non-empty even in the absence of a weak effective top-partition collection. Thus, the previous corollary provides a sufficient but not a necessary condition.

\begin{example}\label{ejemplo 1}
Consider the four-agent problem  with externalities given in Table \ref{table ejemplo de 4 agentes OP}.  Observe that both $\{12,34\}$ and $\{13,24\}$ are stable partitions. Note that, for instance, partitions $\pi=\{12,3,4\}$ and $\pi'=\{12,34\}$ belong to any top-partition collection. Moreover, if we consider $C=\{34\}$, then $\pi$ and $\pi'$ are related as in Definition \ref{defino weak effective}, but there is no agent $j\in\{34\}$ such that $\pi \succ_j \pi'.$ Thus, there is no weak effective top-partition collection.
\begin{table}[ht!]{\small
    \centering
         \begin{tabular}{@{}ccccccccccc@{}}\hline
        $\boldsymbol{1}$ && $\boldsymbol{2}$ && $\boldsymbol{3}$ && $\boldsymbol{4}$  \\ 
        \hline
        $\{12,34\}$ && $\{13,24\}$ && $\{13,24\}$ && $\{12,34\}$ \\ 
        $\{12,3,4\}$ && $\{1,3,24\}$ && $\{13,2,4\}$ && $\{1,2,34\}$ \\ 
        $\{13,24\}$&& $\{12,34\}$ && $\{12,34\}$ &&$\{13,24\}$  \\ 
       $\{13,2,4\}$  && $\{12,3,4\}$ && $\{1,2,34\}$ &&$\{1,3,24\}$   \\
       $\vdots$  && $\vdots$ && $\vdots$ && $\vdots$ \\       \hline
         \end{tabular} 
         \caption{\small A problem with no weak effective top-partition collection and non-empty stable set.}
          \label{table ejemplo de 4 agentes OP}}
\end{table}

\end{example}

Once stability is guaranteed in coalition formation problems, it is relevant to study whether agents always reach stability when left to interact independently or whether an arbitrator must be introduced to ensure their stability. In order to do so, we first must establish a relation among partition.
The \textit{domination relation} $\gg$ over $\Pi$ is defined as follows: $\pi'\gg \pi$ if and only if there is a coalition $C\in \pi'$ such that partition $\pi$ is blocked by coalition $C$ via partition $\pi'$.\footnote{ It is important to highlight that the blocking coalition must be formed within the new partition. This provides the intuition that the domination relation is connected to the notion of stability. } Given the domination relation $\gg$ between partitions, let $\gg^T$ be the \emph{transitive closure} of $\gg$.  That is, given any two partitions $\pi$ and $\pi'$, we have that $\pi'\gg^T\pi$ if and only if there is a finite sequence of partitions $\pi=\pi^0,\pi^1,\ldots,\pi^J=\pi'$ such that, for each $j\in \{1,\ldots,J\}$, $\pi^{j}\gg \pi^{j-1}.$

Now, we are in a position to formally present the notion of convergence to stability.
\begin{definition}\label{defino convergencia}
    A problem $(\mathcal{K},\succsim)$ with non-empty stable set exhibits \textbf{convergence to stability} if for each non-stable partition $\pi'\in \Pi $ there is a stable partition $\pi\in \Pi $ such that $\pi\gg ^{T}\pi' $. 
\end{definition}

Even though the existence of an effective top-coalition collection (or its weaker version, a weak effective top-coalition collection) guarantees the non-emptiness of the stable set, the following result shows that there may be a lack of convergence to stability.

\begin{proposition}\label{remark no convergence}
For problems in $\mathcal{R}$, the existence of an effective top-partition collection does not guarantee convergence to stability. 
\end{proposition}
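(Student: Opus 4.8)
The plan is to establish Proposition~\ref{remark no convergence} by exhibiting a single counterexample: a problem $(\mathcal{K},\succsim)\in\mathcal{R}$ that admits an effective top-partition collection $\mathcal{V}$ (so that, by Theorem~\ref{core=v}, $\mathcal{C}=\mathcal{V}\neq\emptyset$ and, a fortiori, $\mathcal{S}\neq\emptyset$) but for which there is a non-stable partition $\pi'$ from which no stable partition can be reached via $\gg^T$. Since $\mathcal{C}=\mathcal{V}=\mathcal{S}$ would follow from effectiveness together with Theorem~\ref{core=estable}-type reasoning in the order-preserving case, here I would work in the general $\mathcal{R}$ domain and pick preferences that are \emph{not} order-preserving, so that the stable set equals $\mathcal{V}$ but the dynamics off $\mathcal{V}$ get trapped.

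The construction I would aim for: take a small agent set (four or five agents should suffice, mirroring the size of Examples~\ref{ejemplo_core_strictsubset_stable} and~\ref{ejemplo 1}), designate one or two partitions to form $\mathcal{V}$, and verify effectiveness by checking, for each unordered pair $\pi,\pi'\in\mathcal{V}$, that there is a coalition $S\in\pi$ with $S\subseteq D(\pi,\pi')$ all of whose members strictly prefer $\pi$ to $\pi'$. Then I would carve out, among the remaining (non-top) partitions, a nonempty ``trap'' set $\mathcal{T}$ disjoint from $\mathcal{V}$ that is closed under $\gg^{-1}$ in the sense that whenever $\sigma\in\mathcal{T}$ and $\rho\gg\sigma$, we still have $\rho\in\mathcal{T}$; equivalently, no partition outside $\mathcal{T}$ dominates any partition inside $\mathcal{T}$. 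Concretely this means arranging a cycle of dominations $\sigma_1\gg\sigma_2\gg\cdots\gg\sigma_m\gg\sigma_1$ among non-top partitions, with the additional property that every partition dominating some $\sigma_i$ is itself one of the $\sigma_j$'s. If such a $\mathcal{T}$ exists, then for $\pi'\in\mathcal{T}$ the set $\{\rho:\rho\gg^T\pi'\}$ is contained in $\mathcal{T}$, hence contains no stable partition, so convergence to stability fails while $\mathcal{S}=\mathcal{V}\neq\emptyset$.

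The steps in order are: (1) write down $N$, $\mathcal{K}$, and a preference table; (2) identify $\mathcal{V}$ and verify it is a top-partition collection (each agent's top $|\mathcal{V}|$ partitions lie in $\mathcal{V}$); (3) verify effectiveness of $\mathcal{V}$, invoking Theorem~\ref{core=v} to conclude $\mathcal{C}=\mathcal{S}=\mathcal{V}$ once one also checks no non-top partition is stable (which follows because each non-top partition is blocked by $N$ via any $\pi\in\mathcal{V}$ with $N\in\pi$ — so I should make sure the grand coalition, or at least some coalition of $\mathcal{V}$, does the blocking appropriately, or more carefully verify statelessness directly); (4) exhibit the dominating cycle $\sigma_1\gg\cdots\gg\sigma_m\gg\sigma_1$ and check the closure condition that nothing outside $\{\sigma_1,\dots,\sigma_m\}\cup(\text{other trapped partitions})$ dominates any $\sigma_i$; (5) conclude that from $\sigma_1$ one never reaches $\mathcal{S}$.

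The main obstacle is step (4) combined with the consistency demands of steps (2)–(3): the preferences that make $\mathcal{V}$ effective force many agents to rank the $\mathcal{V}$-partitions at the very top, and I must then choose the \emph{lower} parts of the preference lists so that (a) the trap partitions dominate each other in a cycle, (b) no trap partition is dominated by a $\mathcal{V}$-partition or by any untrapped partition (otherwise the dynamics could escape upward), and (c) the blocking moves used in the cycle respect conditions (i)–(iv) of myopic blocking and the requirement $C\in\pi'$ for domination — in particular the ``abandoned agents become singletons'' clause constrains which $\pi'$ can dominate a given $\pi$ via a given $C$. Getting a small table that simultaneously satisfies all of these, while staying in $\mathcal{R}$ (no indifference across distinct own-coalitions), is the delicate bookkeeping; I expect a $4$- or $5$-agent example analogous in spirit to the blocking cycle $\{13,2\}\to\{12,3\}\to\{1,23\}\to\{13,2\}$ of Example~\ref{ejemplo 0}, but grafted \emph{below} a genuine effective top-partition collection so that the top is stable and the bottom cycles forever.
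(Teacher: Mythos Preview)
Your plan is sound—a counterexample is exactly what the paper provides—but the paper short-circuits the delicate step~(4) by a simpler device. Instead of designating one or two top partitions and then engineering an explicit trap cycle whose closure under $\gg$ must be checked coalition by coalition, the paper takes $|\mathcal{V}|=1$: all five agents share the \emph{same} top partition $\pi^\star=\{12,3,45\}$, so effectiveness is automatic and $\mathcal{S}=\{\pi^\star\}$. The lower parts of the preference lists are then chosen so that $\pi^\star$ dominates \emph{no} other admissible partition: for every $C\in\pi^\star=\{12,3,45\}$ and every listed $\pi\neq\pi^\star$, one of the myopic conditions~(iii) or~(iv) fails. Since any $\gg^T$-path ending at $\pi^\star$ would need a final step $\pi^\star\gg\sigma$, this immediately shows that no non-stable partition reaches $\pi^\star$; the ``trap'' is simply all of $\Pi\setminus\{\pi^\star\}$, and no cycle needs to be exhibited or verified for closure. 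Your route would work, but the paper's is shorter and avoids precisely the bookkeeping you flag as the main obstacle: it replaces ``build a closed cycle below $\mathcal{V}$'' with ``make the unique stable partition a $\gg$-source.''
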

\begin{proof}
    We provide a counterexample to show that convergence to stability may fail even when an effective top-partition collection exists. Consider the five-agent problem with externalities given in Table \ref{table ejemplo de 5 agentes con externalidades} in which preferences can be completed arbitrarily.
\begin{table}[ht!]{\small
    \centering
         \begin{tabular}{@{}ccccccccc@{}}\hline
        $\boldsymbol{1}$ && $\boldsymbol{2}$ && $\boldsymbol{3}$ && $\boldsymbol{4}$ && $\boldsymbol{5}$ \\ 
        \hline
        $\{12,3,45\}$ && $\{12,3,45\}$ && $\{12,3,45\}$ && $\{12,3,45\}$ && $\{12,3,45\}$ \\ 
        $\{14,3,25\}$ && $\{123,4,5\}$ && $\{13,2,4,5\}$ && $\{123,4,5\}$ && $\{14,3,25\}$ \\ 
        $\{13,2,4,5\}$ && $\{14,3,25\}$ && $\{123,4,5\}$ && $\{14,2,3,5\}$ && $\{14,2,3,5\}$ \\ 
         $\{123,4,5\}$ && $\{14,2,3,5\}$ && $\{14,3,25\}$ && $\{13,2,4,5\}$ && $\vdots$\\ 
          $\{14,2,3,5\}$ && $\{13,2,4,5\}$ && $\{14,2,3,5\}$ && $\{14,3,25\}$ &&  \\ 
        $\vdots$  && $\vdots$ && $\vdots$ && $\vdots$ &&  \\       
        \hline
         \end{tabular} 
         \caption{\small A problem with an effective top-partition collection and lack of convergence to stability.}
          \label{table ejemplo de 5 agentes con externalidades}}
\end{table}
 In this problem, the effective top-partition collection is $\mathcal{V} = \{\{12, 3, 45\}\}$, which coincides with the stable set $\mathcal{S} = \{\{12, 3, 45\}\}$.  However, for any partition $\pi \neq \{12, 3, 45\}$, there exists no coalition $C$ such that $\{12, 3, 45\} \gg \pi$ via $C$. This implies that while $\{12, 3, 45\}$ is stable, other partitions cannot be directly dominated by it, preventing the dynamic process from converging to stability.   \end{proof}

\subsection{Problems with order-preserving preferences}\label{subseccion de order-preserving preferences}
Next, we study problems with order-preserving preferences. These preferences capture the idea that although an agent's well-being is affected by the coalitions formed by others (i.e., externalities are present), what matters most to the agent is the ranking of the coalitions to which she herself belongs. In other words, agents have preferences with externalities, but of a lexicographic nature: they primarily rank their own coalitions and consider the configuration of other coalitions only secondarily.

 Given a partition $\pi$ and an agent $i$, let $[\pi]_i$ denote the set of all partitions containing coalition $\pi(i)$, i.e. $[\pi]_i=\{\pi'\in \Pi:\pi'(i)=\pi(i)\}.$
\medskip

\noindent\textbf{Order-preserving preferences:}  a preference relation $\succsim_i$ is order-preserving if for each $\pi,\pi'\in \Pi$,
\begin{enumerate}[(i)]
    \item $\pi \sim_i\pi'$ implies $\pi(i)=\pi'(i)$,
    \item $\pi(i)\neq \pi'(i)$ and $\pi \succ_i \pi'$ imply $\widetilde{\pi}\succ_i\widetilde{\pi}'$ for each $\widetilde{\pi}\in [\pi]_i$ and each  $\widetilde{\pi}'\in [\pi']_i$.\footnote{In the literature, order-preserving preferences are also known as preferences with weak externalities or egocentric preferences \citep[see][among others]{sasaki1996two,hong2022core,fonseca2023notes}.}
\end{enumerate} Let $\boldsymbol{\mathcal{R}_{op}}$ denote the domain of problems with order-preserving preferences.

\medskip

Condition (i) says that an agent can be indifferent only between partitions that leave her in the same coalition. Condition (ii) says that given two different partitions, $\pi$ and $ \widetilde{\pi}$ such that  $ \widetilde{\pi} \in  [\pi]_i$, there is no partition $\pi' \notin  [\pi]_i$ such that $\pi \succ_i \pi' \succ_i \widetilde{\pi}$.  This means that each agent is concerned first of all about the coalition she belongs to, and then about the other coalitions.

Under this framework, effectiveness of the top-partition collection guarantees not only that the stable set is non-empty, but also that it coincides with the core and contains a unique partition. This fact will be shown in Theorem \ref{core=estable}, but in order to do so, we first present two technical lemmata.

The first shows that if the only difference between two partitions is that coalition $S$ is formed in one of them whereas subcoalitions of $S$ are formed in the other, then both partitions cannot be part of the same effective top-partition collection.   
 
\begin{lemma}\label{remark para que no sea efectivo}
       Let $\left(\mathcal{K},\succsim\right)$ be a problem in  $\mathcal{R}_{op}$, and let $\mathcal{V}$ be a top-partition collection. If $\pi,\pi'\in \mathcal{V}$ and $S\subseteq N$ are such that $S\in \pi$ and $S=D(\pi,\pi')$,  then  $\mathcal{V}$ is not effective.
\end{lemma}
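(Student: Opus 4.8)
The plan is to argue by contradiction. Suppose $\mathcal{V}$ is effective. Since $S\in\pi$, the set $S$ is a non-empty coalition, so $D(\pi,\pi')=S\neq\emptyset$ forces $\pi\neq\pi'$; in particular $|\mathcal{V}|>1$, so the substantive clause in the definition of an effective top-partition collection applies to the pair $\pi,\pi'$.

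The one structural fact I would isolate first is that $S$ is the \emph{unique} coalition of $\pi$ contained in $S$: if $T\in\pi$ with $\emptyset\neq T\subseteq S$, then, since the members of the partition $\pi$ are pairwise disjoint and $S\in\pi$, we must have $T=S$ (otherwise $T\cap S=\emptyset$, contradicting $\emptyset\neq T=T\cap S$). Now I would apply effectiveness to the pair $\pi,\pi'$ in this order: it produces a non-empty $T\subseteq D(\pi,\pi')=S$ with $T\in\pi$ and $\pi\succ_i\pi'$ for every $i\in T$. By the uniqueness observation, $T=S$, hence $\pi\succ_i\pi'$ for every $i\in S$.

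Next I would invoke effectiveness again, this time for the same two partitions but with the roles of $\pi$ and $\pi'$ interchanged (legitimate because the condition quantifies over both orderings of a pair of distinct partitions in $\mathcal{V}$, exactly as it is used in the proof of Theorem~\ref{core=v}), using that $D(\pi',\pi)=D(\pi,\pi')=S$. This yields a non-empty $T'\subseteq S$ with $T'\in\pi'$ and $\pi'\succ_i\pi$ for every $i\in T'$. Choosing any $i\in T'$, we have $i\in S$, so the previous step gives $\pi\succ_i\pi'$, while this step gives $\pi'\succ_i\pi$ — impossible, since $\succ_i$ is asymmetric. This contradiction shows $\mathcal{V}$ is not effective.

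The argument is essentially immediate once the objects are lined up; the only points requiring care are the reading of the effectiveness condition (that it may be applied with the two partitions in either order) and the disjointness argument that pins the witnessing coalition of $\pi$ down to $S$ itself, so I do not anticipate a genuine obstacle. In fact the order-preserving hypothesis is not really needed for this particular lemma — it will matter in the subsequent results that build on it (e.g. Theorem~\ref{core=estable}).
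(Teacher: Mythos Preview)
Your proof is correct and follows essentially the same route as the paper's: both assume effectiveness, apply it to the ordered pair $(\pi,\pi')$ to conclude $\pi\succ_i\pi'$ for every $i\in S$, and then apply it to $(\pi',\pi)$ to obtain a contradictory witness $T'\subseteq S$. You are simply more explicit than the paper about why the coalition produced by effectiveness for $(\pi,\pi')$ must be $S$ itself (the disjointness argument), whereas the paper compresses that step; your remark that the order-preserving hypothesis is not actually used is also noted in the paper immediately after the lemma.
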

\begin{proof}
Let $\mathcal{V}$ be a top-partition collection and assume that $\pi,\pi'\in \mathcal{V}$ and $S\subseteq N$ are such that $S\in \pi$ and $S=D(\pi,\pi')$. 
Then, $\pi'(i)\subsetneq S$ for each $i\in S$ and $\pi(i)=\pi'(i)$ for each $i\in N \setminus S.$ Assume that $\mathcal{V}$ is effective. Thus, $\pi\succ_i\pi'$ for each $i\in S$. Therefore, there is no $S'\subsetneq S$ such that $\pi' \succ_i \pi$ for each $i\in S'$. This contradicts that $\mathcal{V}$ is effective. 
\end{proof}


Notice that the previous result applies to problems with  unrestricted preferences. Lemma \ref{remark para que no sea efectivo} helps us to demonstrate that effective top-partition collections can contain only one element in problems with  order-preserving preferences. Formally,

\begin{lemma}\label{teo V no efectivo}
    Let $\left(\mathcal{K},\succsim\right)$ be a problem in $\mathcal{R}_{op}$. If $\mathcal{V}$ is a top-partition collection  with $|\mathcal{V}|>1$, then $\mathcal{V}$ is not effective.
\end{lemma}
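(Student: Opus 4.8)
The goal is to show that in the order-preserving domain $\mathcal{R}_{op}$, any top-partition collection with more than one element fails to be effective. The natural strategy is proof by contradiction: suppose $\mathcal{V}$ is a top-partition collection with $|\mathcal{V}|>1$ that \emph{is} effective, and derive a contradiction, ideally by producing two partitions $\pi,\pi'\in\mathcal{V}$ with $S\in\pi$ and $S=D(\pi,\pi')$, at which point Lemma~\ref{remark para que no sea efectivo} closes the argument. So the real content is: among the (at least two) partitions in an effective $\mathcal{V}$, two of them must differ exactly on a single coalition.

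\textbf{Key steps.} First, since $|\mathcal{V}|>1$, pick any two distinct $\pi,\pi'\in\mathcal{V}$. Effectiveness gives a non-empty $S\subseteq D(\pi,\pi')$ with $S\in\pi$ and $\pi\succ_i\pi'$ for every $i\in S$. Now I would exploit order-preservingness to ``localize'' the comparison: because each $i\in S$ has $\pi(i)=S\neq\pi'(i)$ and $\pi\succ_i\pi'$, condition (ii) of order-preserving preferences says that every partition containing coalition $S$ is strictly preferred by $i$ to every partition leaving $i$ in $\pi'(i)$ — in fact to every partition in which $i$'s coalition is ranked as low as (or lower than) $\pi'(i)$. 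The plan is to build a new partition $\widehat\pi$ by ``splicing'': keep the coalition $S$, and outside $S$ copy the coalitions of $\pi'$ (i.e. $\widehat\pi(i)=S$ for $i\in S$ and $\widehat\pi(i)=\pi'(i)$ for $i\notin S$ — this is well-defined precisely because $S\in\pi$ is a block that can be removed cleanly). Then $D(\pi',\widehat\pi)\subseteq S$ and $S\in\widehat\pi$. The crux is to argue $\widehat\pi\in\mathcal{V}$: since $\mathcal{V}$ is a top-partition collection (the $|\mathcal{V}|$ best partitions of every agent), I must show $\widehat\pi$ is ranked by every agent within their top $|\mathcal{V}|$. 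For $i\notin S$, $\widehat\pi(i)=\pi'(i)$ with $\pi'\in\mathcal{V}$, so order-preservingness pins $\widehat\pi$ to the same ``own-coalition tier'' as $\pi'$; for $i\in S$, $\widehat\pi(i)=S=\pi(i)$ with $\pi\in\mathcal{V}$, likewise. A short counting/pigeonhole argument using the structure of $\mathcal{V}$ as the common top-$|\mathcal{V}|$ set — together with condition~(ii) forcing all partitions sharing an agent's coalition into one contiguous block of her ranking — should force $\widehat\pi\in\mathcal{V}$. Once $\widehat\pi\in\mathcal{V}$, I have $\pi',\widehat\pi\in\mathcal{V}$ with $S=D(\pi',\widehat\pi)$ (after checking the inclusion is an equality, possibly after shrinking $S$) and $S\in\widehat\pi$, so Lemma~\ref{remark para que no sea efectivo} yields that $\mathcal{V}$ is not effective — the desired contradiction.

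\textbf{Main obstacle.} The delicate point is establishing $\widehat\pi\in\mathcal{V}$, i.e. that the spliced partition stays inside the common top-$|\mathcal{V}|$ list of every agent. This needs the full force of order-preservingness: the observation that, for agent $i$, the set of partitions in which she sits in a \emph{fixed} coalition $C$ forms a contiguous interval of her preference ranking (condition (ii)), so that membership in $\mathcal{V}$ is essentially determined by which ``own-coalition tiers'' are represented among the top $|\mathcal{V}|$ partitions. One must be careful that mixing $i$'s tier from $\pi$ (for $i\in S$) with $j$'s tier from $\pi'$ (for $j\notin S$) does not push $\widehat\pi$ out of anyone's top $|\mathcal{V}|$; this is where the genericity built into $\mathcal{R}_{op}$ (no indifference across coalitions) and the precise definition of top-partition collection must be combined. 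Handling the possibility that $S\subsetneq D(\pi',\widehat\pi)$ and iterating the splice, or re-selecting $S$ minimally, is a routine but necessary bookkeeping step.
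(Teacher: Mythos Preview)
Your overall strategy---find two partitions in $\mathcal{V}$ that differ on exactly one coalition of one of them, then invoke Lemma~\ref{remark para que no sea efectivo}---matches the paper. But your construction of the spliced partition $\widehat\pi$ is not generally well-defined, and this is a genuine gap, not bookkeeping. You set $\widehat\pi(i)=S$ for $i\in S$ and $\widehat\pi(j)=\pi'(j)$ for $j\notin S$, and justify this by saying ``$S\in\pi$ is a block that can be removed cleanly.'' That is a statement about $\pi$, but what you need is that \emph{$\pi'$} respects the boundary between $S$ and $N\setminus S$. Nothing forces this: take $N=\{1,2,3\}$, $\pi=\{12,3\}$, $\pi'=\{1,23\}$, and $S=\{1,2\}$. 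Then agent $3\notin S$ has $\pi'(3)=\{2,3\}$, which overlaps $S$, so your splice would assign agent $2$ simultaneously to $S=\{1,2\}$ and to $\{2,3\}$. No amount of ``shrinking $S$'' or iterating fixes this; the obstruction is structural.

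The paper avoids this by never leaving the class $[\pi]_i$ of partitions containing $S$. The key step you are groping toward with your ``counting/pigeonhole'' remark is in fact a one-line observation: since $\pi\succ_i\pi'$ with $\pi(i)=S\neq\pi'(i)$, order-preservingness gives $\widetilde\pi\succ_i\pi'$ for \emph{every} $\widetilde\pi\in[\pi]_i$; and since $\pi'\in\mathcal{V}$ (agent $i$'s top $|\mathcal{V}|$ partitions are exactly $\mathcal{V}$), every such $\widetilde\pi$ is also in $\mathcal{V}$. Thus $[\pi]_i\subseteq\mathcal{V}$. Now simply pick two partitions in $[\pi]_i$ differing on a single coalition disjoint from $S$ (break up a non-singleton $S'$ into singletons, or add one if everything outside $S$ is already a singleton) and apply Lemma~\ref{remark para que no sea efectivo}. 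This is both simpler than your splice and immune to the well-definedness problem.
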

\begin{proof}
    Let $\mathcal{V}$ be a top-partition collection with $|\mathcal{V}|>1$. Assume that $\mathcal{V}$ is effective. Let $\pi,\pi'\in \mathcal{V}$ be such that $\pi\neq \pi'$. Then, there is $S\subseteq D(\pi,\pi')$ such that $S\in \pi$ and $\pi \succ_i \pi'$ for each $i\in S$.
    Note that, for $i\in S$,  $\pi \succ_i \pi'$  and the fact that $\succ_i$ is an order-preserving preference imply that $\widetilde{\pi} \succ_i \widetilde{\pi}'$   for each $\widetilde{\pi} \in [\pi]_i$  and each $\widetilde{\pi}' \in [\pi']_i$. 
Then, $[\pi]_i\subseteq \mathcal{V}$. 
Take any $\pi^{\star}\in [\pi]_i$. There are two cases to consider:
\begin{enumerate}
\item[\textbf{1.}]\textbf{There is 
 a coalition $\boldsymbol{S'\in \pi^{\star}$ such that $S'\cap S=\emptyset$ and $|S'|>1.}$} Then, define $\pi^{\star\star}$ as follows:
$$\pi^{\star\star}(i)=\begin{cases}
    \{i\}& \text{for each }i\in S'\\
    \pi^{\star}(i)&\text{otherwise.}
\end{cases}$$
Note that $S'=D(\pi^\star,\pi^{\star\star})$. Thus, by Lemma \ref{remark para que no sea efectivo}, we have that  $\mathcal{V}$ is not effective, a contradiction. 

\item[\textbf{2.}]\textbf{There is no
 coalition $\boldsymbol{S'\in \pi^{\star}$ such that $S'\cap S=\emptyset$ and $|S'|>1.}$} Then, consider any partition $\pi^{\star\star}\in[\pi]_i$ such that there is a unique non-singleton coalition $S^\star$ with   $S^\star\in \pi^{\star\star}\setminus \pi^\star$. Note that,  $D(\pi^\star,\pi^{\star\star})=S^\star.$ Thus, by Lemma \ref{remark para que no sea efectivo}, we have that  $\mathcal{V}$ is not effective, a contradiction.
\end{enumerate}
Therefore, if $|\mathcal{V}|>1$, then $\mathcal{V}$ is not effective.
\end{proof}


Now, we are in a position to prove the following result.

\begin{theorem}\label{core=estable}
   Let $\left(\mathcal{K},\succsim\right)$ be a problem in $\mathcal{R}_{op}$. If the problem has an effective top-partition collection $\mathcal{V}$, then there is $\pi^\star\in \Pi$ such that $\{\pi^\star\} = \mathcal{V}=\mathcal{C}=\mathcal{S}.$ 
\end{theorem}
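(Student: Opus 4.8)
The plan is to assemble the conclusion from the results already in hand and then fill the one remaining gap, namely that $\pi^\star$ is the \emph{only} stable partition. First I would note that a top-partition collection is non-empty by definition, while Lemma~\ref{teo V no efectivo} forbids an effective top-partition collection from having more than one element; hence $\mathcal{V}$ is a singleton, say $\mathcal{V}=\{\pi^\star\}$. Theorem~\ref{core=v}, which is stated on all of $\mathcal{R}$ and therefore applies on $\mathcal{R}_{op}\subseteq\mathcal{R}$, then gives $\mathcal{C}=\mathcal{V}=\{\pi^\star\}$, and since $\mathcal{C}\subseteq\mathcal{S}$ we already have $\{\pi^\star\}\subseteq\mathcal{S}$. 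So everything reduces to proving the reverse inclusion $\mathcal{S}\subseteq\{\pi^\star\}$.

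For that inclusion I would fix an arbitrary $\pi\in\Pi$ with $\pi\neq\pi^\star$ and produce a coalition $C$ and a partition $\pi'$ with $C\in\pi'$ such that $\pi$ is blocked by $C$ via $\pi'$, which makes $\pi$ unstable. Since $\pi\neq\pi^\star$, the set $D(\pi^\star,\pi)$ is non-empty; pick $i\in D(\pi^\star,\pi)$ and put $C:=\pi^\star(i)$. Define $\pi'$ from $\pi$ by the same construction as in Definition~\ref{defino weak effective}: $\pi'(j)=C$ for $j\in C$, $\pi'(j)=\pi(j)$ whenever $\pi(j)\cap C=\emptyset$, and $\pi'(j)=\{j\}$ otherwise. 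By construction $\pi'$ is a feasible partition (its blocks are $C\in\mathcal{K}$, untouched blocks of $\pi$, and singletons), $C\in\pi'$, and conditions (iii)--(iv) of the blocking definition hold; the only substantive point left to check is that every agent in $C$ weakly prefers $\pi'$ to $\pi$, with strict preference for at least one.

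The key observation is that $C\notin\pi$: if $C$ were a block of $\pi$ then, because $i\in C$, we would get $\pi(i)=C=\pi^\star(i)$, contradicting $i\in D(\pi^\star,\pi)$. Hence for every $j\in C$ we have $\pi(j)\neq C=\pi^\star(j)$, i.e.\ $j\in D(\pi^\star,\pi)$. Since $\mathcal{V}=\{\pi^\star\}$ is the common top partition, $\pi^\star\succsim_j\pi$, and as $j\in D(\pi^\star,\pi)$ clause~(i) of order-preservation rules out indifference, so $\pi^\star\succ_j\pi$ with $\pi^\star(j)\neq\pi(j)$. Now clause~(ii) of order-preservation, applied with $\pi^\star$ as the better and $\pi$ as the worse partition, yields $\widetilde\pi\succ_j\widetilde\pi''$ for all $\widetilde\pi\in[\pi^\star]_j$ and $\widetilde\pi''\in[\pi]_j$; instantiating at $\widetilde\pi=\pi'$ (which lies in $[\pi^\star]_j$ because $\pi'(j)=C=\pi^\star(j)$) and $\widetilde\pi''=\pi$ gives $\pi'\succ_j\pi$. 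This holds for every $j\in C$, so $\pi$ is blocked by $C$ via $\pi'$ with $C\in\pi'$, proving $\pi\notin\mathcal{S}$. Combining the inclusions gives $\{\pi^\star\}=\mathcal{V}=\mathcal{C}=\mathcal{S}$.

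I expect the main obstacle to be exactly the verification in the last paragraph: it is tempting to argue only for the chosen agent $i$, but stability-style blocking requires that \emph{all} members of the new coalition $C$ benefit. The point that makes it work is recognizing that $C=\pi^\star(i)\notin\pi$ forces every member of $C$ to sit in a different coalition under $\pi$ than under $\pi^\star$, which is precisely the situation in which clause~(ii) of order-preservation transfers each agent's strict preference for $\pi^\star$ over $\pi$ into a strict preference for $\pi'$ over $\pi$. A secondary, purely bookkeeping point is checking that the constructed $\pi'$ satisfies conditions (iii)--(iv) and that $C\in\pi'$, both of which are immediate from the definition of $\pi'$.
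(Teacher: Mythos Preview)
Your proof is correct and follows essentially the same route as the paper: use Lemma~\ref{teo V no efectivo} to get $|\mathcal{V}|=1$, invoke Theorem~\ref{core=v} for $\mathcal{C}=\mathcal{V}=\{\pi^\star\}$, and then, for any $\pi\neq\pi^\star$, pick a block $C$ of $\pi^\star$ not in $\pi$, build the partition $\pi'$ that inserts $C$ into $\pi$ and singletonizes the affected agents, and use order-preservation to push the strict preference $\pi^\star\succ_j\pi$ down to $\pi'\succ_j\pi$ for every $j\in C$. The only cosmetic differences are that the paper frames this as a contradiction argument and splits off a superfluous first case (checking whether $\pi^\star$ itself already dominates $\pi$ via $C$), whereas you go directly to the constructed $\pi'$; and the paper picks $C\in\pi^\star\setminus\pi$ directly while you obtain it via an agent $i\in D(\pi^\star,\pi)$ and then verify $C\notin\pi$. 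Your version is in fact slightly cleaner and more explicit about why every $j\in C$ satisfies $\pi(j)\neq\pi^\star(j)$, which is the hypothesis needed for clause~(ii) of order-preservation.
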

\begin{proof}
   Note that, by definition of  $\mathcal{C}$ and $\mathcal{S},$ we have that $\mathcal{C}\subseteq\mathcal{S}.$
   Assume  that $\mathcal{C}\subsetneq\mathcal{S}$. Then, there exists $\pi\in \mathcal{S}\setminus\mathcal{C}.$ Let $\mathcal{V}$ be an effective top-partition collection. By Lemma \ref{teo V no efectivo}, since $\mathcal{V}$ is effective, it follows that $|\mathcal{V}|=1$.  By Theorem \ref{core=v}, we have $\mathcal{V}=\mathcal{C}$. Since $\pi\notin \mathcal{C}$, it follows that $\pi\notin \mathcal{V}$. Let $\mathcal{V}=\{\pi^{\star}\}$. As $\pi^{\star}\neq \pi$, there exists $C\in \pi^{\star}\setminus \pi$. There are two cases to consider:
   \begin{enumerate}
       \item[\textbf{1.}] \textbf{$\boldsymbol{\pi^{\star}\gg \pi$ via $C}$.} Then, $\pi \notin \mathcal{S}$ which is a contradiction. 
        \item[\textbf{2.}] \textbf{$\boldsymbol{\pi^{\star}\not\gg \pi$ via $C}$.} Let $\widetilde{\pi}$ be such that 
       $$\widetilde{\pi}(i)=\begin{cases}
        C  &  i\in C \\
        \pi(i) & \pi(i)\cap C=\emptyset \\
        \{i\} & \text{otherwise}.
       \end{cases}
       $$

Let $j\in C$. Since $\mathcal{V}=\{\pi^{\star}\}$, we have $\widetilde{\pi}\in [\pi^{\star}]_j$. Moreover, because preferences are order-preserving, it follows that $\widetilde{\pi}\succ_j\pi$. Thus, $\widetilde{\pi}\gg \pi$ via $C$, which implies $\pi \notin \mathcal{S}$, again a contradiction. 
   \end{enumerate}

In both cases, we reach a contradiction, which implies that $\{\pi^\star\} = \mathcal{V}=\mathcal{C}=\mathcal{S}.$  
 \end{proof}

In the following example, we show that the converse of Theorem \ref{core=estable} does not hold. 

\begin{example}
 Consider the three-agent problem with externalities given in Table \ref{contraejemplo de vuelta teorema}. Observe that there is a unique stable partition, $\{123\}$, which is different from the top-partition collection. This implies that $|\mathcal{V}|>1$ and, therefore, $\mathcal{V}$ is not effective as Lemma \ref{teo V no efectivo} states.
\begin{table}[ht!]{\small
    \centering
         \begin{tabular}{@{}ccccccccc@{}}\hline
        $\boldsymbol{1}$ && $\boldsymbol{2}$ && $\boldsymbol{3}$  \\ 
        \hline
        $\{13,2\}$ && $\{12,3\}$ && $\{1,23\}$ \\ 
        $\{123\}$ && $\{123\}$ && $\{123\}$  \\ 
        $\{12,3\}$ && $\{1,23\}$ && $\{13,2\}$  \\ 
        $\{1,23\}$&& $\{13,2\}$ && $\{12,3\}$ \\ 
       $\{1,2,3\}$  && $\{1,2,3\}$ && $\{1,2,3\}$  \\ \hline
         \end{tabular} 
         \caption{\small A counter example for the converse of Theorem \ref{core=estable}.}
          \label{contraejemplo de vuelta teorema}}
\end{table}\hfill$\Diamond$
\end{example}
In contrast to the domain of problems with unrestricted preferences $\mathcal{R}$, where the existence of an effective top-partition collection does not guarantee convergence to stability (as shown in Proposition  \ref{remark no convergence}), in the domain of problems with order-preserving preferences $\mathcal{R}_{op}$, the existence of such a collection ensures convergence to stability.

\begin{theorem}\label{teoconver}
     Let $\left(\mathcal{K},\succsim\right)$ be a problem in  $\mathcal{R}_{op}$. If the problem has an effective top-partition collection, then there is convergence to stability.
\end{theorem}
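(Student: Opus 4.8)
The plan is to combine the structural results already established: by Theorem~\ref{core=estable}, an effective top-partition collection $\mathcal{V}$ in $\mathcal{R}_{op}$ must be a singleton $\{\pi^\star\}$, and moreover $\{\pi^\star\}=\mathcal{C}=\mathcal{S}$. So there is exactly one stable partition $\pi^\star$, and convergence to stability amounts to showing that from every other partition $\pi\neq\pi^\star$ we can reach $\pi^\star$ along a finite $\gg$-chain, i.e. $\pi^\star\gg^T\pi$. The key tool is case~2 in the proof of Theorem~\ref{core=estable}: since $\pi^\star\neq\pi$ there is a coalition $C\in\pi^\star\setminus\pi$, and the auxiliary partition $\widetilde\pi$ formed by ``installing'' $C$, keeping the coalitions of $\pi$ disjoint from $C$, and atomizing the rest, satisfies $\widetilde\pi\gg\pi$ via $C$ (using that preferences are order-preserving and $\widetilde\pi\in[\pi^\star]_j$ for each $j\in C$, hence $\widetilde\pi\succ_j\pi$). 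The idea is to iterate this construction.

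\textbf{Key steps.} First I would fix $\pi\neq\pi^\star$ and, using the construction above with some $C_1\in\pi^\star\setminus\pi$, produce $\pi^1:=\widetilde\pi$ with $\pi^1\gg\pi$. Crucially, $\pi^1$ now \emph{contains} the coalition $C_1\in\pi^\star$, and every coalition of $\pi$ that was disjoint from $C_1$ survives in $\pi^1$; so $\pi^1$ agrees with $\pi^\star$ at least on $C_1$. I would then define a potential function, e.g. the number of agents $i$ with $\pi^1(i)=\pi^\star(i)$, and argue that it strictly increased. If $\pi^1\neq\pi^\star$, repeat: pick $C_2\in\pi^\star\setminus\pi^1$, form the analogous partition $\pi^2$ with $\pi^2\gg\pi^1$. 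One must check that the coalitions of $\pi^\star$ already present in $\pi^1$ are not destroyed when installing $C_2$: this holds because $C_2$ and any $C_1\subseteq$ (agents already matched as in $\pi^\star$) are disjoint coalitions of the partition $\pi^\star$, so $\pi^1(i)\cap C_2=\emptyset$ for those $i$, putting them in the ``keep $\pi^1(i)$'' branch of the construction. Hence the potential is strictly monotone along $\pi\ll\pi^1\ll\pi^2\ll\cdots$, the chain terminates at $\pi^\star$ in at most $n$ steps, and transitivity of $\gg^T$ gives $\pi^\star\gg^T\pi$.

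\textbf{Main obstacle.} The delicate point is verifying the monotonicity of the potential at each step, i.e.\ that installing a new coalition $C_{k+1}\in\pi^\star\setminus\pi^k$ does not undo previously achieved agreement with $\pi^\star$ and strictly adds the agents of $C_{k+1}$. This rests on the observation that once $\pi^k$ shares a coalition $A$ with $\pi^\star$, then $A$ and $C_{k+1}$ are distinct blocks of the single partition $\pi^\star$, hence disjoint, so the agents of $A$ land in the ``$\pi^k(i)\cap C_{k+1}=\emptyset$'' branch and keep $A$; meanwhile the agents of $C_{k+1}$ newly match $\pi^\star$ (and $C_{k+1}\notin\pi^k$ guarantees strict increase). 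I would also need to confirm the blocking conditions of Definition~\ref{bloqueo core-wise} are met at every step — conditions (iii) and (iv) are built into the construction, and (i)–(ii) follow from order-preservingness exactly as in case~2 of Theorem~\ref{core=estable}, since $\pi^{k+1}\in[\pi^\star]_j$ for every $j\in C_{k+1}$. Once this bookkeeping is in place the result follows.
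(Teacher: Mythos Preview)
Your proposal is correct and follows the same overall line as the paper: both arguments invoke Theorem~\ref{core=estable} (and the lemmata behind it) to conclude that $\mathcal{V}=\{\pi^\star\}=\mathcal{C}=\mathcal{S}$, so that $\pi^\star=t_1(\succ_i)$ for every $i\in N$, and then derive $\pi^\star\gg^T\pi$ for every $\pi\neq\pi^\star$. The paper's proof, however, compresses this last step into a single sentence, simply asserting that $t_1(\succ_i)=\pi^\star$ for all $i$ implies $\pi^\star\gg^T\pi$; you instead flesh this out by iterating the ``install a coalition $C\in\pi^\star\setminus\pi^k$'' construction from case~2 of Theorem~\ref{core=estable} and tracking a monotone potential (the number of agents already matched as in $\pi^\star$). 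Your elaboration is a genuine service: Proposition~\ref{remark no convergence} shows that the bare implication ``$\pi^\star$ is everyone's top $\Rightarrow \pi^\star\gg^T\pi$'' fails in $\mathcal{R}$, so the order-preserving hypothesis is doing real work precisely where you use it (to get $\widetilde\pi\succ_j\pi$ for every $j\in C$ from $\widetilde\pi\in[\pi^\star]_j$), and your disjointness check---that previously installed blocks of $\pi^\star$ are never dismantled because any two blocks of $\pi^\star$ are disjoint---is exactly what guarantees termination.
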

\begin{proof}
    By Theorem \ref{core=v}, $\mathcal{V}=\mathcal{C}$. By Lemma \ref{teo V no efectivo}, $|\mathcal{C}|=1.$ By Theorem \ref{core=estable}, $\mathcal{S}=\mathcal{C}$ and thus $|\mathcal{S}|=1.$ Let $\pi^{\star}$ be the unique stable partition. Then, $t_1(\succ_i)=\pi^{\star}$ for each $i\in N$. Then, $\pi^{\star} \gg^T \pi$ for each $\pi\in \Pi.$ Therefore, there is convergence to stability.
\end{proof}

The following example shows that in a problem  with order-preserving preferences, the existence of an effective top-partition collection is not necessary to exhibit convergence to stability.
\medskip

\noindent\textbf{Example \ref{ejemplo 1} (continued)}
\emph{Consider the four-agent problem with externalities given in Table \ref{table ejemplo de 4 agentes OP} and assume that preferences are completed arbitrarily but subject to being order-preserving.  Recall that partitions $\{12,34\}$ and $\{13,24\}$ are stable. To observe that there is convergence to stability, we only need to verify that any unstable partition is $\gg^{T}$-- dominated by a stable one. Note that the only partitions that need to be analyzed are $\{12,3,4\}$, $\{13,2,4\}$, $\{1,3,24\}$, and $\{1,2,34\}$, as they are the only ones ranked above a stable partition for some agent. Thus, for these partitions, we have  $\{12,34\}\gg\{12,3,4\}$ via $34$, $\{13,24\}\gg\{13,2,4\}$ via $24$, $\{13,24\}\gg\{1,3,24\}$ via $13$, and $\{12,34\}\gg\{1,2,34\}$ via $12$. So in this example, there is convergence to stability. However, by Lemma \ref{teo V no efectivo}, there is no effective top-partition collection.}   \hfill$\Diamond$

\subsection{Further results with order-preserving preferences: absorbing sets}\label{subseccion de absorbing sets}

So far, we have developed our results based on a sufficient condition for the existence of a non-empty core (and, therefore, a non-empty stable set): the effective top-partition collection. As mentioned above, this condition is an adaptation for problems with externalities of the condition proposed by \cite{banerjee2001core}. However, in problems with externalities and order-preserving preferences, there is a more direct way to obtain sufficient conditions: by linking a problem with order-preserving preferences with one having preferences without externalities.

To do this, we first formally present the notion of preference without externalities as follows: \medskip

\noindent \textbf{Preferences without externalities:}  a preference relation $\succsim_i$ has no externalities if for each $\pi,\pi'\in \Pi$, $\pi \sim_i\pi'$ if and only if  $\pi(i)=\pi'(i)$. 

\noindent Let $\boldsymbol{\mathcal{R}_{we}}$ denote the domain of problems with preferences without externalities.
\medskip

In this restrictive domain of problems, an agent is only concerned with the coalitions she belongs to.  Note that, the domain of problems with preferences without externalities is a subset of the domain of problems with order-preserving preferences, i.e. $\mathcal{R}_{we}\subsetneq \mathcal{R}_{op}\subsetneq \mathcal{R}.$

 A broader solution concept useful to make precise the link between problems with order-preserving preferences and problems without externalities is  that of an ``absorbing set'', which can be described as follows.
Given the dynamic process $\gg$, an absorbing set is a minimal collection of partitions that, once entered through this dynamic process, is never left. Formally, 
\begin{definition}\label{definition of absorbing}
A non-empty set of partitions $\mathcal{A}\subseteq \Pi$ is an \textbf{absorbing set} whenever for each $\pi \in \mathcal{A}$ and each $\pi' \in \Pi \setminus \{\pi\},$ $$\pi' \gg^T \pi\text{ if and only if }\pi' \in \mathcal{A}.$$
\noindent If $|\mathcal{A}| \geq 3$,   $\mathcal{A}$ is said to be a \textbf{non-trivial absorbing set}. Otherwise, the absorbing set is \textbf{trivial}. 
\end{definition}\medskip
\noindent Notice that partitions in $\mathcal{A} $ are
symmetrically connected by the relation $\gg^{T}$, and that no partition in $\mathcal{A}$ is dominated by a partition outside the set.

An important feature of an absorbing set is that it always exists in coalition formation problems. Moreover, in the case that the stable set is non-empty, there is a one-to-one correspondence between the set of stable partitions and the set of trivial absorbing sets.

Now, we are in a position to formalize the link between problems with order-preserving preferences and problems without externalities.
\begin{definition}
    Given a problem $\left(\mathcal{K},\succsim\right) \in \mathcal{R}_{op}$, we can \textbf{induce} a problem  $\left(\mathcal{K},\succsim'\right) \in \mathcal{R}_{we}$ in the following way: for each agent $i\in N$ and two different partitions $\pi,\pi'\in \Pi$, 
    \begin{enumerate}[(i)]
        \item $\pi(i)=\pi'(i)$ implies that $\pi\sim'_i\pi'$, and
        \item $\pi(i)\neq \pi'(i)$ and $\pi\succ_i\pi'$ if and only if $\pi\succ'_i\pi'$.\footnote{The approach of relating problems with order-preserving preferences to corresponding settings without externalities has also been pursued in the literature, for instance, by \cite{sasaki1996two, fonseca2022incentives}. }
    \end{enumerate}  
\end{definition}

The following example shows a problem with order-preserving preferences that has two absorbing sets, one trivial and one non-trivial. It also shows how to induce a problem without externalities.
\begin{example}[Example 1 in \citealp{bonifacio2024characterization}]\label{ejemplo sobre induccion}
Let  $N=\lbrace
{1,2,3,4, 5}\rbrace$ and $$\mathcal{K}=\{12,123,15,1,23, 2, 34, 3, 45, 4, 5\}.$$ Consider the order-preserving preferences in Table \ref{table ejemplo con order preserving y varios Absorbing}. 
\begin{table}[ht!]{\small
    \centering
         \begin{tabular}{@{}ccccccccc@{}}\hline
        $\boldsymbol{1}$ && $\boldsymbol{2}$ && $\boldsymbol{3}$ && $\boldsymbol{4}$ && $\boldsymbol{5}$ \\ 
        \hline
        $\{12,3,4,5\}$ && $\{15,23,4\}$ && $\{12,34,5\}$ && $\{12,3,45\}$ && $\{15,23,4\}$ \\ 
        $\{12,34,5\}$ && $\{1,23,4,5\}$ && $\{15,2,34\}$ && $\{123,45\}$ && $\{15,2,34\}$ \\ 
        $\{12,3,45\}$ && $\{1,23,45\}$ && $\{1,2,34,5\}$ && $\{1,23,45\}$ && $\{15,2,3,4\}$ \\ 
         $\{123,4,5\}$ && $\{123,4,5\}$ && $\{123,4,5\}$ && $\{1,2,3,45\}$ && $\{12,3,45\}$\\ 
          $\{123,45\}$ && $\{123,45\}$ && $\{123,45\}$ && $\{12,34,5\}$ && $\{123,45\}$ \\ 
     $\{15,23,4\}$  &&  $\{12,3,4,5\}$    &&    $\{15,23,4\}$      &&    $\{15,2,34\}$  &&    $\{1,23,45\}$  \\
  $\{15,2,34\}$       &&   $\{12,3,45\}$      &&   $\{1,23,4,5\}$  && $\{1,2,34,5\}$     &&   $\{1,2,3,45\}$   \\
  $\{15,2,3,4\}$     &&   $\boldsymbol{\{12,34,5\}}$      &&   $ \{1,23,45\}$  && $\{12,3,4,5\}$     &&   $\{12,3,4,5\}$   \\
    $\{1,23,4,5\}$     &&   $\boldsymbol{\{15,2,34\}}$      &&   $\{12,3,4,5\} $  && $\{123,4,5\}$     &&   $\{12,34,5\}$   \\
   $\{1,23,45\}$     &&   $\{15,2,3,4\}$      &&   $\{12,3,45\} $  && $\{15,23,4\}$     &&   $\{123,4,5\}$   \\
    $\{1,2,34,5\}$     &&   $\{1,2,34,5\}$      &&   $\{15,2,3,4\} $  && $\{15,2,3,4\}$     &&   $\{1,23,4,5\}$   \\
   $ \{1,2,3,45\}$     &&   $\{1,2,3,45\}$      &&   $\{1,2,3,45\} $  && $\{1,23,4,5\}$     &&   $\{1,2,34,5\}$   \\
   $\{1,2,3,4,5\}$     &&   $\{1,2,3,4,5\}$      &&   $ \{1,2,3,4,5\}$  && $\{1,2,3,4,5\}$     &&   $\{1,2,3,4,5\}$   \\
        \hline
         \end{tabular} 
         \caption{\small A problem with order-preserving preferences and two absorbing sets.}
          \label{table ejemplo con order preserving y varios Absorbing}} 
\end{table}
In this example, there are two absorbing sets: one trivial and one non-trivial. The trivial absorbing set contains the stable partition $\{123,45\}$, while the non-trivial absorbing set consists of the following partitions: $\{12,34,5\}$, $\{12,3,45\}$, $\{15,23,5\}$, $\{15,2,34\}$, and $\{1,23,45\}.$
\begin{table}[ht!]
    \centering
        {\small \begin{tabular}{@{}c@{}}\hline
        $\boldsymbol{1}$  \\ 
        \hline
        $\{12,3,4,5\}, \{12,34,5\},\{12,3,45\}$ \\ 
         $\{123,4,5\},\{123,45\}$  \\ 
     $\{15,23,4\},\{15,2,34\},\{15,2,3,4\}$     \\
    $\{1,23,4,5\},\{1,23,45\},\{1,2,34,5\}, \{1,2,3,45\},\{1,2,3,4,5\}$ \\      
        \hline
         \end{tabular}}
         
         \caption{The induced preference without externalities of agent $1$ from Table \ref{table ejemplo con order preserving y varios Absorbing}.}
         \label{table del agente 1 inducida}
\end{table}
Since the preferences are order-preserving, we can induce a problem without externalities. Consider, for instance, the preferences of agent 1. The corresponding induced preferences without externalities are shown in Table \ref{table del agente 1 inducida}. It is worth mentioning, however, that the induced problem without externalities shares the same absorbing sets as the original problem with order-preserving preferences. This is because the induced problem is basically the same problem as in Example 1 in \cite{bonifacio2024characterization} since without externalities agents only care about the coalition they are in.
\hfill $\Diamond$
\end{example}

Note that a problem with order-preserving preferences induces a unique problem without externalities. However, a problem without externalities can be induced by many problems with order-preserving preferences. 
We say that two problems with order-preserving preferences are \emph{associated} if they induce the same problem without externalities. 
Note that a problem with order-preserving preferences is trivially associated with its induced problem without externalities.

In the previous example, we showed that the problem with order-preserving preferences shares the same absorbing sets as its induced counterpart. This is not a coincidence. In fact, the following theorem establishes that, under order-preserving preferences, associated problems have the same absorbing sets.

\begin{theorem}\label{teo associated problems}
Two associated problems with order-preserving preferences generate the same absorbing sets.
\end{theorem}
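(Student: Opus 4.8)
The goal is to show that if $(\mathcal{K},\succsim)$ and $(\mathcal{K},\succsim^\ast)$ are two problems in $\mathcal{R}_{op}$ that induce the same problem without externalities $(\mathcal{K},\succsim')$, then they generate the same absorbing sets. Since absorbing sets are defined purely in terms of the transitive closure $\gg^T$ of the domination relation, and that relation is determined by which partitions dominate which, the natural strategy is to prove that the domination relation $\gg$ is the same for both problems; then $\gg^T$, and hence the collection of absorbing sets (Definition \ref{definition of absorbing}), coincides automatically. So the entire proof reduces to the claim: for all $\pi,\pi'\in\Pi$, $\pi'\gg\pi$ in $(\mathcal{K},\succsim)$ if and only if $\pi'\gg\pi$ in $(\mathcal{K},\succsim^\ast)$.

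\textbf{Key steps.} First I would unwind the definition of $\gg$: $\pi'\gg\pi$ means there is a coalition $C\in\pi'$ such that $\pi$ is myopically blocked by $C$ via $\pi'$, i.e., conditions (i)--(iv) of the blocking definition hold together with $C\in\pi'$. Conditions (iii) and (iv) are purely combinatorial statements about the partitions $\pi$ and $\pi'$ and do not involve preferences at all, so they hold in one problem iff they hold in the other. The only preference-dependent conditions are (i) $\pi'\succsim_i\pi$ for all $i\in C$ and (ii) $\pi'\succ_j\pi$ for some $j\in C$. The crucial observation is that for every $i\in C=\pi'(i)$, the blocking structure forces $\pi(i)\neq\pi'(i)$: indeed $C\in\pi'$ but, because $C$ breaks apart the coalitions of $\pi$ that it meets (condition (iii) leaves the abandoned agents as singletons), $C\notin\pi$ — more precisely, for $i\in C$ with $\pi(i)\neq C$ we directly have $\pi(i)\ne\pi'(i)$, and one must also rule out the degenerate possibility $C\in\pi\cap\pi'$, which would make the move vacuous (no agent strictly improves, contradicting (ii)) or can be excluded since then $\pi=\pi'$. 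So for each relevant $i\in C$ we are comparing two partitions that place $i$ in different coalitions. Now I invoke the inducement conditions: since both $\succsim_i$ and $\succsim^\ast_i$ induce the same $\succsim'_i$, condition (ii) of the inducement definition gives that for $\pi(i)\ne\pi'(i)$, $\pi'\succ_i\pi \iff \pi'\succ'_i\pi \iff \pi'\succ^\ast_i\pi$. Hence conditions (i) and (ii) of blocking transfer verbatim between the two problems, which establishes $\pi'\gg\pi$ in one iff in the other.

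\textbf{Finishing.} Once $\gg$ is shown to be identical in both problems, $\gg^T$ is identical, and a set $\mathcal{A}\subseteq\Pi$ satisfies the defining biconditional of Definition \ref{definition of absorbing} in $(\mathcal{K},\succsim)$ exactly when it does in $(\mathcal{K},\succsim^\ast)$; therefore the two problems have the same absorbing sets. Since a problem in $\mathcal{R}_{op}$ is associated with its own induced problem without externalities, this in particular re-proves that the order-preserving problem and its inducement share absorbing sets (as illustrated in Example \ref{ejemplo sobre induccion}).

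\textbf{Main obstacle.} The delicate point is the claim that in any blocking move $\pi$ by $C$ via $\pi'$ with $C\in\pi'$, every agent $i\in C$ whose ranking matters actually has $\pi(i)\ne\pi'(i)$, so that the inducement condition (which only speaks about partitions separating an agent) is applicable. One has to handle carefully the agents $i\in C$ for which $\pi(i)=C$ would be possible; the resolution is that if $C\in\pi$ as well, then by conditions (iii)--(iv) the move changes nothing (so $\pi=\pi'$, excluded) — more carefully, if $C\in\pi\cap\pi'$ then every coalition of $\pi$ meeting $C$ equals $C$, the abandoned-agent set is empty, and combined with (iv) we get $\pi=\pi'$, contradicting $\pi'\in\Pi\setminus\{\pi\}$. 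Thus $C\notin\pi$, and then $\pi(i)\neq C=\pi'(i)$ for all $i\in C$, so the inducement condition applies to every member of $C$ and the argument goes through. Beyond this, the proof is essentially bookkeeping.
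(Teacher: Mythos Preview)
Your proposal is correct and follows essentially the same route as the paper: reduce the statement to showing that the domination relation $\gg$ is identical for the two problems, which in turn comes down to checking that the preference-dependent blocking conditions (i)--(ii) transfer because every $i\in C$ has $\pi(i)\neq\pi'(i)$ and the inducement biconditional applies. The paper argues directly that an order-preserving problem and its induced problem without externalities share the same $\gg$ (and hence any two associated problems do), while you compare two associated problems through their common inducement; the content is the same, and you are in fact more careful than the paper in explicitly ruling out the degenerate case $C\in\pi\cap\pi'$.
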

\begin{proof}
    Note that we only need to see that a problem with order-preserving preferences $(\mathcal{K},\succsim)$ and its induced problem without externalities $(\mathcal{K},\succsim')$ generate the same domination relation. Let $\gg$ denote the domination relation of the problem with order-preserving preferences and $\gg'$ denote the domination relation of the induced problem without externalities.    
    Let $\pi,\pi'\in \Pi$ such that $\pi' \gg \pi$. Then, there is $C\in \pi'$ such that (i) $\pi' \succsim_i \pi$ for each $i\in C,$ and (ii) $\pi' \succ_j \pi$ for some $j\in C$. Let $i\in C$. Since $\pi'(i)\neq \pi(i)$ and $\succsim_i$ is order-preserving, $\pi' \succ_i \pi$. Therefore $\pi' \succ'_i \pi$ and $\pi' \gg' \pi$. Similarly, we can see that $\pi' \gg' \pi$ implies  $\pi' \gg \pi$. Therefore, order $\gg$ and order $\gg'$ are equivalent.
\end{proof}

In the previous theorem, the assumption that the problems have order-preserving preferences is crucial for them to generate the same absorbing sets. In fact, this condition is quite sensitive. Consider, for example, a slight modification in the preferences of Example \ref{ejemplo sobre induccion}: assume a small change in the preference of agent $2$ by swapping the order of partitions $\{12,34,5\}$ and $\{15,2,34\}$ (these partitions are depicted in bold in Table \ref{table ejemplo con order preserving y varios Absorbing}). With this minor adjustment, the preferences are no longer order-preserving, and the new problem has three absorbing sets. A trivial absorbing set consisting of partition $\{123,45\}$, a new trivial absorbing set consisting of partition $\{15,2,34\}$, and a non-trivial absorbing set---different than the non-trivial absorbing set of the original problem---consisting of the following partitions: $\{12,3,4,5\}$, $\{123,4,5\}$, $\{15,2,3,4\}$, $\{1,23,4,5\}$, $\{1,2,34,5\}$, and $\{1,2,3,45\}$.

Theorem \ref{teo associated problems} is particularly powerful as it enables the identification of various sufficient conditions for the existence of a non-empty stable set---and, therefore, a non-empty core---in problems with order-preserving preferences, by  extending and applying established results from the literature in settings without externalities.

Let $\mathcal{R}_{op}^{\tau}$ denote the domain of problems with order-preserving preferences such that the corresponding induced problem without externalities satisfies the domain restriction $\tau$, where $\tau$ refers to one of the well-known conditions in the literature required to ensure stability on problems without externalities. Examples include: the common ranking property \citep{farrell1988partnerships}; the (weak) top-coalition property \citep{banerjee2001core}; the ordinally balanced property and the weak consecutiveness property \citep{bogomolnaia2002stability}; union responsiveness, intersection responsiveness, singularity, and essentiality \citep{alcalde2006}; the pairwise alignment property \citep{pycia2012stability}; and friend-oriented preferences \citep{klaus2023core}, among others.

Formally,
\[
\mathcal{R}_{op}^{\tau} = \left\{ (\mathcal{K},\succsim) \in \mathcal{R}_{op} \, \bigg| \, \text{the induced problem $ (\mathcal{K}, \succsim') \in  \mathcal{R}_{we}$ satisfies  restriction $\tau$} \right\}.
\]
\begin{corollary}\label{corollary:x-condition}
A problem   in  $\mathcal{R}_{op}^{\tau}$ has non-empty stable set.
\end{corollary}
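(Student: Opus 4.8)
The plan is to reduce the claim to two facts already available in the excerpt: (1) by Theorem~\ref{teo associated problems}, a problem $(\mathcal{K},\succsim)\in\mathcal{R}_{op}$ and its induced problem $(\mathcal{K},\succsim')\in\mathcal{R}_{we}$ generate the same domination relation $\gg$, hence the same transitive closure $\gg^T$, hence the same collection of absorbing sets; and (2) in the absence of externalities, all notions of blocking coincide and the set of stable partitions equals the core, so that the induced problem has a non-empty stable set precisely when it admits a stable partition in the classical sense. The definition of $\mathcal{R}_{op}^{\tau}$ says exactly that the induced problem $(\mathcal{K},\succsim')$ satisfies the domain restriction $\tau$, and by hypothesis $\tau$ is one of the known conditions (common ranking, (weak) top-coalition, ordinal balancedness, pairwise alignment, friend-oriented preferences, etc.) under which the core of a coalition formation problem without externalities is non-empty.

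First I would invoke the relevant classical result for $\tau$: since $(\mathcal{K},\succsim')\in\mathcal{R}_{we}$ satisfies $\tau$, its core is non-empty, i.e.\ there exists a partition that is not blocked in $(\mathcal{K},\succsim')$. Because $(\mathcal{K},\succsim')$ has no externalities, an agent outside a deviating coalition is unaffected by the deviation, so the core and the stable set of $(\mathcal{K},\succsim')$ coincide; equivalently, the non-blocked partition is a $\gg'$-maximal element, and $\{\pi^\star\}$ is a trivial absorbing set of $(\mathcal{K},\succsim')$. Next I would transfer this to the original problem: by Theorem~\ref{teo associated problems} the absorbing sets of $(\mathcal{K},\succsim)$ and $(\mathcal{K},\succsim')$ coincide, so $\{\pi^\star\}$ is also a trivial absorbing set of $(\mathcal{K},\succsim)$; by the remark following Definition~\ref{definition of absorbing}, a trivial absorbing set corresponds to a stable partition, hence $\pi^\star\in\mathcal{S}$ in $(\mathcal{K},\succsim)$ and $\mathcal{S}\neq\emptyset$. (Alternatively, and more directly, one notes that $\gg$ and $\gg'$ are literally the same relation, so $\pi^\star$ being $\gg'$-maximal makes it $\gg$-maximal, which is exactly the statement that $\pi^\star$ is stable in $(\mathcal{K},\succsim)$.)

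The step I expect to require the most care is matching the several domain restrictions $\tau$ to the precise stability guarantee they provide in the literature, and checking that each of them genuinely yields a non-empty \emph{core} (equivalently stable set) rather than merely some weaker solution concept; for a couple of these properties (e.g.\ ordinal balancedness or weak consecutiveness) the cited results are about the core, and one should make sure the statement is quoted for the admissible-coalition setting $\mathcal{K}$ rather than the full power set. A second, smaller subtlety is the passage from ``non-blocked partition'' in a problem without externalities to ``$\gg$-maximal / trivial absorbing set'': this needs the observation, stated in the excerpt, that without externalities the myopic core equals the stable set, so that non-blockedness in the core sense is equivalent to non-domination in the $\gg$ sense. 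Once these are in place the argument is immediate, and I would keep the written proof to a few lines, citing Theorem~\ref{teo associated problems}, the equivalence of blocking notions without externalities, and the appropriate classical non-emptiness theorem for the given $\tau$.
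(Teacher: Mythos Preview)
Your proposal is correct and follows precisely the route the paper intends: the corollary is stated without proof immediately after Theorem~\ref{teo associated problems}, and the surrounding discussion makes clear that it is meant to follow by transferring the classical non-emptiness result for $\tau$ through the equivalence of domination relations (hence of absorbing sets, hence of stable partitions) between a problem in $\mathcal{R}_{op}$ and its induced problem in $\mathcal{R}_{we}$. Your direct alternative---noting that $\gg=\gg'$ so $\gg'$-maximality of $\pi^\star$ is literally $\gg$-maximality---is exactly the content of the proof of Theorem~\ref{teo associated problems}, so either phrasing is fine.
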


\section{Beyond myopic expectations}\label{Seccion  Beyond myopic}
Until now, we have assumed that, in deciding to block, the deviating agents expect the coalitions they leave to break up into singletons. However, it is possible that the remaining agents regroup in various ways. To allow for this, we could reformulate the definition of myopic blocking by removing condition (iii), i.e. still rooted in myopic logic but beyond the strict singleton assumption, formally:

 A partition $\pi$ is \textit{weak-myopically blocked by a coalition $C$ via a partition $\pi'$} if:
    \begin{enumerate}[(i)]
        \item $\pi' \succsim_i \pi$ for each $i\in C,$
        \item $\pi' \succ_j \pi$ for some $j\in C$,
        \item for each $C'\in \pi$ such that $C'\cap C= \emptyset$, $C' \in \pi'$.
    \end{enumerate}

Note that this definition of blocking does not impose any condition on how the abandoned agents are reorganized. One special possibility is that the deviating agents might anticipate that the coalitions they leave behind will remain together (the $\delta$-\emph{model} in \cite{hart1983endogenous},  \emph{b-interchanges} in \cite{tamura1993transformation}, and \emph{projection rule} in \cite{bloch2014expectation}).

\begin{definition}\label{bloqueo2 core-wise}
The set of partitions that are not weak-myopically blocked is called the \textbf{weak myopic core}, and is denoted by $\mathcal{C}^{\omega}.$
\end{definition}

Note that the inclusion $\mathcal{C}^{\omega} \subseteq \mathcal{C}$ follows directly from the definition. However, the following example shows that the inclusion can be strict.

\begin{example}\label{ejemplo core star}
Consider the four-agent problem with externalities given in Table \ref{table ejemplo core star}.  We have the weak core is $\mathcal{C}^{\omega}
=\{\{13,24\}\}$ because partition $\{12,34\}$ is weak-myopically blocked by coalition $\{1,3\}$ via partition $\{13,24\}$ and all other partitions are blocked by the grand coalition. However, the myopic core is $\mathcal{C}=\{\{13,24\}, \{12,34\}\}$.

\begin{table}[ht!]{\small
    \centering
         \begin{tabular}{@{}ccccccccccc@{}}\hline
        $\boldsymbol{1}$ && $\boldsymbol{2}$ && $\boldsymbol{3}$ && $\boldsymbol{4}$  \\ 
        \hline
        $\{13,24\}$ && $\{12,34\}$ && $\{13,24\}$ && $\{12,34\}$ \\ 
        $\{12,34\}$ && $\{13,24\}$ && $\{12,34\}$ && $\{13,24\}$ \\ 
        $\{13,2,4\}$&& $\{13,2,4\}$ && $\{13,2,4\}$ &&$\{13,2,4\}$  \\ 
        $\vdots$  && $\vdots$ && $\vdots$ && $\vdots$ \\       \hline
         \end{tabular} 
         \caption{\small A problem with the weak myopic core strictly contained in the core.}
          \label{table ejemplo core star}}
\end{table}
\end{example}

When an effective top-partition collection exists, the two cores coincide and can be identified through the top-partition collection.

\begin{theorem}\label{core*=v}
    Let $\left(\mathcal{K},\succsim\right)$ be a 
    problem 
    in 
    $\mathcal{R}$. If there is an effective top-partition collection $\mathcal{V}$, then $\mathcal{C}^{\omega}=\mathcal{V}.$ 
\end{theorem}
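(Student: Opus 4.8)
The plan is to mimic the proof of Theorem \ref{core=v}, showing both inclusions $\mathcal{C}^{\omega}\subseteq\mathcal{V}$ and $\mathcal{V}\subseteq\mathcal{C}^{\omega}$, and exploiting the fact that the weak-myopic blocking notion differs from the myopic one only by dropping condition (iii) while retaining conditions (i), (ii), and the ``untouched coalitions stay'' condition (which is condition (iv) in the original definition and condition (iii) in the weak-myopic one).

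First I would prove $\mathcal{C}^{\omega}\subseteq\mathcal{V}$. This is the easy direction and is essentially identical to the corresponding step in Theorem \ref{core=v}: if $\pi\notin\mathcal{V}$, pick any $\pi'\in\mathcal{V}$; since every agent ranks $\pi'$ above $\pi$ (because all top-$|\mathcal{V}|$ partitions lie in $\mathcal{V}$ and $\pi$ does not), the grand coalition $N$ weak-myopically blocks $\pi$ via $\pi'$ — note that when $C=N$ there are no abandoned agents and no untouched coalitions, so conditions (i)--(iii) of weak-myopic blocking reduce to $\pi'\succ_i\pi$ for all $i$, which holds. Hence $\pi\notin\mathcal{C}^{\omega}$.

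Next I would prove $\mathcal{V}\subseteq\mathcal{C}^{\omega}$. If $|\mathcal{V}|=1$ this is immediate since the unique element of $\mathcal{V}$ is everyone's top partition and cannot be blocked in any sense. If $|\mathcal{V}|>1$, take $\pi\in\mathcal{V}$ and suppose for contradiction that $\pi$ is weak-myopically blocked by some coalition $T$ via some $\pi'$. As before, $\pi'\in\mathcal{V}$ (otherwise some agent in $T$ would be improving by moving to a partition outside the top collection, impossible). Without loss of generality take $T\subseteq D(\pi',\pi)$, so $\pi'\succ_i\pi$ for each $i\in T$. Since $\mathcal{V}$ is effective and $\pi,\pi'\in\mathcal{V}$ are distinct, there is a non-empty $S\subseteq D(\pi',\pi)$ with $S\in\pi$ and $\pi\succ_i\pi'$ for each $i\in S$; in particular $S\cap T=\emptyset$. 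Now $S$ is a coalition of $\pi$ disjoint from $T$, so by condition (iii) of weak-myopic blocking, $S\in\pi'$, which forces $\pi'(i)=S=\pi(i)$ for every $i\in S$, contradicting $S\subseteq D(\pi',\pi)$. Hence $\pi\in\mathcal{C}^{\omega}$, and combining the two inclusions gives $\mathcal{C}^{\omega}=\mathcal{V}$.

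The main thing to get right — the only place where this differs from Theorem \ref{core=v} — is checking that the ``untouched coalition'' condition of weak-myopic blocking (condition (iii), playing the role of the old condition (iv)) still applies to $S$, i.e. that $S\cap T=\emptyset$ genuinely implies $S\in\pi'$. This works precisely because effectiveness hands us a full coalition $S$ of $\pi$ on which the agents' preferences point the ``wrong'' way, and such an $S$ must be disjoint from any blocking coalition $T$ (whose members all prefer $\pi'$); so the weakening of the blocking notion — which only relaxes how the \emph{abandoned} agents regroup — does not affect the argument, since $S$ is among the untouched, not the abandoned, coalitions. There is no real obstacle beyond carefully matching up the renumbered conditions; the argument is a direct transcription of the proof of Theorem \ref{core=v}.
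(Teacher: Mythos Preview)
Your proposal is correct and follows exactly the approach the paper has in mind: the paper simply states that the proof ``follows exactly the same steps as the proof of Theorem~\ref{core=v}'' and omits it, and your write-up is precisely that transcription, with the only nontrivial observation---that the coalition $S$ supplied by effectiveness is disjoint from $T$ and hence falls under the ``untouched coalitions stay'' clause (condition (iii) of weak-myopic blocking, the old condition (iv))---made explicit and correctly justified.
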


The proof follows exactly the same steps as the proof of Theorem \ref{core=v}; therefore, it is omitted. \medskip

Moreover, when preferences are order-preserving, the expectations of the blocking coalition regarding the behavior of agents outside the coalition become irrelevant. For instance, two extreme cases of expectations can be considered. On one hand, the deviating agents might adopt a prudent stance, anticipating that external agents (including those who were not previously in any coalition with a deviating agent) will regroup in a way that seeks to harm the members of the deviating coalition as much as possible  \citep[\textit{pessimistic rule} in][]{bloch2014expectation}. On the other hand, the deviating agents might adopt an optimistic stance, expecting external agents to regroup in a way that seeks to benefit the members of the deviating coalition as much as possible \citep[\textit{optimistic rule} in][]{bloch2014expectation}.

Formally, let $\underline{\pi}$ denote the worst partition in the set $[\pi]_i$ for agent $i$, that is, $\pi \succsim_i \underline{\pi}$ for all $\pi \in [\pi]_i$, and let $\overline{\pi}$ denote the best partition in the set $[\pi]_i$ for agent $i$, that is, $\overline{\pi} \succsim_i \pi$ for all $\pi \in [\pi]_i$.

A partition $\pi$ is \textit{prudently blocked  by a coalition $C$ via a partition $\pi'$} if:
    \begin{enumerate}[(i)]
        \item $\underline{\pi}' \succsim_i \pi$ for each $i\in C,$
        \item $\underline{\pi}' \succ_j \pi$ for some $j\in C$,
    \end{enumerate}

A partition $\pi$ is \textit{optimistically blocked  by a coalition $C$ via a partition $\pi'$} if:
    \begin{enumerate}[(i)]
        \item $\overline{\pi}' \succsim_i \pi$ for each $i\in C,$
        \item $\overline{\pi}' \succ_j \pi$ for some $j\in C$,
    \end{enumerate}
    
The \textit{domination relation} $\underline{\gg}$ over $\Pi$ is defined as follows: $\pi'\underline{\gg} \pi$ if and only if there is a coalition $C\in \pi'$ such that partition $\pi$ is prudently blocked by coalition $C$ via partition $\pi'$. The \textit{domination relation} $\overline{\gg}$ over $\Pi$ is defined as follows: $\pi'\overline{\gg} \pi$ if and only if there is a coalition $C\in \pi'$ such that partition $\pi$ is optimistically blocked by coalition $C$ via partition $\pi'$. For each of these domination relations we can define the corresponding absorbing set following the lines of Definition \ref{definition of absorbing}, we can call them \textit{prudent absorbing set} and \textit{optimistic absorbing set}, respectively.

For each of these new notions of absorbing sets, an analogous result to Theorem \ref{teo associated problems} can be established.

\begin{theorem}\label{teo associated problems2}
Two associated problems with order-preserving preferences generate the same collection of prudent absorbing sets. Moreover, this collection is also equal to the collection of optimistic absorbing sets of the two associated problems. 
\end{theorem}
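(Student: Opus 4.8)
The plan is to reduce the statement to the same core observation used in the proof of Theorem~\ref{teo associated problems}: under order-preserving preferences, the relevant domination relation on a problem and the domination relation on its induced problem without externalities coincide. It suffices to show this separately for the prudent domination relation $\underline{\gg}$ and for the optimistic domination relation $\overline{\gg}$, and then to observe that in a problem without externalities all three domination relations ($\gg$, $\underline{\gg}$, $\overline{\gg}$) collapse to the single domination relation $\gg'$, because $[\pi]_i$ is an indifference class for every agent $i$, so $\underline{\pi}' \sim'_i \pi' \sim'_i \overline{\pi}'$. Once both $\underline{\gg}$ and $\overline{\gg}$ are shown equal to $\gg'$ on the induced problem, transitivity of the closures is inherited and the collections of prudent and optimistic absorbing sets of the original problem both equal the collection of (ordinary) absorbing sets of the induced problem; since associated problems share the same induced problem without externalities, the conclusion follows for both classes simultaneously, and they are equal to each other.

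The key step is the following claim: for a problem $(\mathcal{K},\succsim) \in \mathcal{R}_{op}$ with induced problem $(\mathcal{K},\succsim') \in \mathcal{R}_{we}$, we have $\pi' \underline{\gg} \pi$ in the original problem if and only if $\pi' \gg' \pi$ in the induced problem, and likewise for $\overline{\gg}$. I would argue this as follows. Suppose $\pi' \underline{\gg} \pi$, so there is $C \in \pi'$ with $\underline{\pi}' \succsim_i \pi$ for all $i \in C$ and $\underline{\pi}' \succ_j \pi$ for some $j \in C$. Fix $i \in C$. Since $C \in \pi'$ is a coalition in $\pi'$ but agent $i$'s coalition in $\pi$ is different (because $C \notin \pi$; indeed if $C \in \pi$ there is nothing to block), we have $\pi(i) \neq \pi'(i)$, hence $\pi(i) \neq \underline{\pi}'(i)$ as $\underline{\pi}' \in [\pi']_i$. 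By order-preservation, condition~(i) of order-preserving preferences rules out indifference between partitions leaving $i$ in different coalitions, so $\underline{\pi}' \succsim_i \pi$ forces $\underline{\pi}' \succ_i \pi$; combined with condition~(ii) of order-preservation, $\underline{\pi}' \succ_i \pi$ with $\underline{\pi}' \in [\pi']_i$ yields $\pi'' \succ_i \pi$ for \emph{every} $\pi'' \in [\pi']_i$, and in particular $\pi' \succ_i \pi$. Hence $\pi' \succ'_i \pi$ for all $i \in C$, which gives $\pi' \gg' \pi$. Conversely, if $\pi' \gg' \pi$ via $C \in \pi'$, then $\pi' \succ'_i \pi$ for all $i \in C$, so $\pi(i) \neq \pi'(i)$ and $\pi \succ_i$-comparisons on different coalitions are preserved in both directions by the induction definition, giving $\pi' \succ_i \pi$ in the original problem; by order-preservation again this means $\underline{\pi}' \succ_i \pi$ for all $i \in C$, hence $\pi' \underline{\gg} \pi$. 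The argument for $\overline{\gg}$ is identical, using $\overline{\pi}' \in [\pi']_i$ in place of $\underline{\pi}'$.

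I do not anticipate a serious obstacle; the main thing to be careful about is the edge case $C \in \pi$ (so that $C$ is not actually a ``new'' coalition and $\pi(i) = \pi'(i)$ for $i \in C$), which must be excluded at the outset since a coalition already present in $\pi$ cannot myopically block $\pi$ — no agent in it is made strictly better off in a way consistent with condition~(ii) once one unwinds the definitions. A second point deserving a sentence is why equality of the domination relations propagates to equality of the families of absorbing sets: this is immediate from Definition~\ref{definition of absorbing}, which is phrased purely in terms of $\gg^T$, so equal relations give equal transitive closures and hence literally the same sets $\mathcal{A}$. Finally, to conclude that the prudent and optimistic collections coincide with each other (not merely each with the induced problem's), one simply notes both equal the family of absorbing sets of the common induced problem $(\mathcal{K},\succsim')$, which is well-defined and unique by the remark that a problem in $\mathcal{R}_{op}$ induces a unique problem in $\mathcal{R}_{we}$.
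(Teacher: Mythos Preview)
Your proposal is correct and follows essentially the same approach the paper intends: as in the proof of Theorem~\ref{teo associated problems}, you show that under order-preserving preferences the domination relation on the original problem coincides with the one on the induced problem without externalities, now carried out for $\underline{\gg}$ and $\overline{\gg}$ separately, and then conclude via the common induced problem. Your treatment of the edge case $C\in\pi$ and the passage from equal domination relations to equal absorbing sets is exactly what the omitted proof would have to contain.
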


The proof proceeds in exactly the same way as that of Theorem \ref{teo associated problems}, and is therefore omitted.

\cite{fonseca2023notes} show that, in the marriage market with externalities and order-preserving preferences, the optimistic stable set coincides with the prudent stable set \citep[Theorem 1 in][] {fonseca2023notes}; moreover, both coincide with the stable set in the absence of externalities \citep[Theorem 2 in][]{fonseca2023notes}. Therefore, our Theorem \ref{teo associated problems2} generalizes their results, as it does not restrict attention to coalitions of at most two agents in two-sided problems and adopts a broader notion of stability.

\section{Conclusions}\label{conclusiones}

This paper addresses coalition formation problems with externalities and myopic expectations, offering new insights into the existence and dynamics of stable outcomes. We introduced the concepts of the effective and weak effective top-partition collections as sufficient conditions for guaranteeing the non-emptiness of the core and the stable set, respectively.
The effectiveness of the top-partition collection also serves as a sufficient condition for the non-emptiness of a weaker version of the myopic core, under expectations weaker than the myopic assumption.

Under order-preserving preferences, we also showed that the effective top-partition collection not only ensure the existence but also the uniqueness of the stable partition, which coincides with the core. 
Moreover, we examine a dynamic process of coalition formation and show that 
the effective top-partition collection condition ensures convergence to stability. This last fact highlights the compatibility between stability and decentralized adjustment dynamics in environments with externalities. Furthermore, we explored the role of absorbing sets as a more general solution concept. When the aforementioned sufficient conditions are not met, absorbing sets provide an alternative framework to study stability. In particular, we showed that, under order-preserving preferences, problems with externalities that induce the same underlying problem without externalities generate identical absorbing sets, regardless of the type of expectations. This connection allows for the transfer of known results from simpler settings to more complex environments.

Overall, our findings contribute to a deeper understanding of stability in coalition formation with externalities and open the door to future work exploring richer dynamic models, heterogeneous expectations, or strategic behavior in environments with interdependent preferences.

\bibliographystyle{ecta}
\bibliography{bibliografia}

\end{document}